\newtheorem{lemma}{Lemma}
\newtheorem{remark}{Remark}
\newtheorem{proposition}{Proposition}
\begin{document}
%
\title{Optimized Energy and Information Relaying in Self-Sustainable IRS-Empowered WPCN}

\author{Bin Lyu,~\IEEEmembership{Member,~IEEE,}
 Parisa Ramezani,~\IEEEmembership{Student Member,~IEEE,} 
 Dinh Thai Hoang,~\IEEEmembership{Member,~IEEE,}\\
  Shimin Gong,~\IEEEmembership{Member,~IEEE,}
   Zhen Yang,~\IEEEmembership{Senior Member,~IEEE,} 
   and Abbas Jamalipour,~\IEEEmembership{Fellow,~IEEE}
\IEEEcompsocitemizethanks{\IEEEcompsocthanksitem B. Lyu and Z. Yang are with Key Laboratory of Ministry of Education in Broadband Wireless Communication and Sensor Network Technology, Nanjing University of Posts and Telecommunications, Nanjing 210003, China (email: blyu@njupt.edu.cn, yangz@njupt.edu.cn).\protect
\IEEEcompsocthanksitem P. Ramezani and A. Jamalipour are with School of Electrical and Information Engineering, University of Sydney, Sydney, NSW 2006, Australia (email: parisa.ramezani@sydney.edu.au, a.jamalipour@ieee.org).
\IEEEcompsocthanksitem D. T. Hoang is with School of Electrical and Data Engineering, University of Technology Sydney, Sydney, NSW 2007, Australia (email: hoang.dinh@uts.edu.au).
\IEEEcompsocthanksitem S. Gong is with School of Intelligent Systems Engineering, Sun Yat-sen University, China, and also with Peng Cheng Laboratory, Shenzhen 518055, China (e-mail: gongshm5@mail.sysu.edu.cn).
}}

\markboth{IEEE Transactions on Communications,~Vol.~XX, No.~XX, April~2020}%
{Shell \MakeLowercase{\textit{et al.}}: Bare Demo of IEEEtran.cls for IEEE Communications Society Journals}
%



\maketitle

\begin{abstract}
This paper proposes a hybrid-relaying scheme empowered by a self-sustainable intelligent reflecting surface (IRS) in a  wireless powered communication network (WPCN), to simultaneously improve the performance of downlink energy transfer (ET) from a hybrid access point (HAP) to multiple users and uplink information transmission (IT) from users to the HAP. We propose time-switching (TS)  and power-splitting (PS)  schemes for the IRS, where the IRS can harvest energy from the HAP's signals by switching between energy harvesting and signal reflection in the TS scheme or adjusting its reflection amplitude in the PS scheme. For both the TS and PS schemes, we formulate the sum-rate maximization problems by jointly optimizing the IRS's phase shifts for both ET and IT and network resource allocation. To address each problem's non-convexity, we propose a two-step algorithm to obtain the near-optimal solution with high accuracy. To show the structure of resource allocation, we also investigate the optimal solutions for the schemes with random phase shifts. Through numerical results, we show that our proposed schemes can achieve significant system sum-rate gain  compared to the baseline scheme without IRS.
\end{abstract}

\begin{IEEEkeywords}
Wireless powered communication network, intelligent reflecting surface, time scheduling, phase shift optimization.
\end{IEEEkeywords}

%
\IEEEpeerreviewmaketitle

\section{Introduction}
With nearly 50 billion Internet of Things (IoT) devices by 2020 and even 500 billion by 2030 \cite{IoT}, we have already stepped into the new era of IoT. Having the vision of being self-sustainable, IoT has observed the energy limitation as a major issue for its widespread development. Recent advances in energy harvesting (EH) technologies, especially radio frequency (RF) EH \cite{RFSurvey}, opened a new approach for self-sustainable IoT devices to harvest energy from dedicated or ambient RF sources. 
This has led to the emergence of wireless powered communication networks (WPCNs), in which low-cost IoT devices can harvest energy from a dedicated hybrid access point (HAP) and then use the harvested energy to transmit data to the HAP \cite{JuOne}. The development of WPCNs has been a promising step toward the future self-sustainable IoT networks \cite{Abbas}.

Although possessing significant benefits and attractive features for low-cost IoT networks, WPCNs are facing some challenges  which need to be addressed before they can be widely deployed in practice.
In particular, the uplink  information transmission (IT) of IoT devices in WPCNs relies on their harvested energy from downlink energy transfer (ET) of the HAP. However, the IoT devices typically suffer from doubly attenuations of RF signal power over distance \cite{JuOne}, which severely limits the network performance. Reducing the distance between the HAP and IoT devices is one solution to enhance EH efficiency  and achieve greater transmission rates. However, this is not a viable option because IoT devices are randomly deployed in practice, and thus we may not be able to control all of them over their locations. Hence, more efficient and cost-effective solutions are required to enhance the downlink ET efficiency and improve the uplink transmission rate for WPCNs in order to guarantee that WPCNs can be seamlessly fitted into the IoT environment with satisfying performance.

Relay cooperation is an efficient way to enhance the performance of WPCNs, which can be classified into two categories of active relaying and passive relaying. Active relaying refers to scenarios in which the communication between a transmitter and its destined receiver is assisted by a relay which forwards the user’s information to the destination via active RF transmission \cite{HeChen}-\cite{ZengTwo}. However, active relaying schemes have several limitations. Particularly, EH relays need to harvest sufficient energy from the RF sources and use the harvested energy to actively forward information to the receiver. Due to the high power consumption of active relays, it may take a  long time for the relays to harvest enough energy. This thus reduces the IT time of the network. Moreover, most active relays operate in the half-duplex mode, which further shortens the effective IT time, resulting in a network performance degradation. Full-duplex (FD) relays can relax this issue; however, complex self-interference (SI) cancellation techniques are needed at the FD relays to ensure that the SI is effectively mitigated \cite{CJZhong}. In addition, the number of antennas at EH relays is usually limited due to hardware constraints, which also leads to  a  limited performance enhancement. Passive relaying exploits the idea of backscatter communication (BackCom)  for assisting in the source-destination communication \cite{LyuOne}-\cite{Gong}. Specifically, BackCom relay nodes do not need any RF components as they passively backscatter the source’s signals to strengthen the received signals at the receiver. Accordingly, the power consumption of BackCom relay nodes is extremely low and no dedicated time is needed for the relays’ EH \cite{LiuOne}. Nonetheless, as no active signal generation is involved and the passive relays simply reflect the received signal from the source, passive relaying schemes suffer from a poor performance. 

Intelligent reflecting surface (IRS), consisting of a large number of low-cost reflecting elements, has recently emerged as a promising solution to improve the performance of wireless communication networks \cite{Wu2020Survey,Gong2019Towards}.  This technology enables transmitting information without any need for generating new signals but recycling the existing ones \cite{Renzo}. In this way, IRS can adjust the communication environment and create favorable conditions for energy and information transmission without using  energy-hungry RF chains. Having the capability of cooperating in downlink ET and uplink IT, IRS has several advantages over the conventional active and passive relaying techniques \cite{Wu2020Survey}. First of all, IRS is a cost-effective technology and it can be readily integrated into existing wireless communication networks without incurring high implementation costs. Furthermore, IRS is more energy- and spectrum-efficient as compared to conventional relaying methods because it consumes very low power and uses the limited spectrum resources more efficiently. IRS essentially works in the full-duplex mode without causing any interference and adding thermal noise, which further improves the spectral efficiency. Moreover, it is easy to increase the number of IRS  elements to achieve higher performance gains.
\subsection{Background}
IRS has recently been applied to various wireless communication networks and demonstrated promising results for improving the performance in terms of spectrum efficiency (SE) and energy efficiency (EE). References \cite{Huang2018Conf} and \cite{Huang2019IRS} consider the EE maximization problem in an IRS-assisted multi-user downlink communication network. The authors jointly optimize the power allocation at the AP and the phase shifts at the IRS and show that the proposed IRS-assisted communication remarkably outperforms the conventional relay-assisted communication in terms of EE. The authors in \cite{Taha} propose an architecture where a few IRS elements are assumed to be active. Based on the proposed architecture, the authors develop two solutions using compressive sensing and deep learning for designing IRS's reflection matrices. The authors in \cite{HuangDeep} exploit deep reinforcement learning based algorithms to jointly design the transmit beamforming at the base station and phase shifts at the IRS to maximize the sum-rate of downlink   multiple-input  single-output  (MISO) systems.
A low-complexity channel estimation protocol is proposed in \cite{Mishra}, which does not require any prior knowledge on channel state information (CSI) or any active participation from IRS. The authors  design the near-optimal active beamforming at the power beacon and passive beamforming at the IRS in order to maximize the received power at an EH user.  In \cite{Yu}, the authors propose two efficient algorithms for finding the optimal beamformer at the AP and phase shifts at the IRS in an IRS-assisted MISO communication system. The authors in \cite{WuIRS} study the problem of transmit power minimization in a multi-user downlink communication network by jointly optimizing the active transmit beamforming at the AP and passive reflect beamforming at the IRS subject to the users' individual signal-to-interference-plus-noise ratio (SINR) constraints. Compared to the conventional massive MIMO system, the proposed IRS-enhanced model in \cite{WuIRS} can considerably reduce the required transmit power. \cite{WuSWIPT} and \cite{PanSWIPT} study the integration of IRS with simultaneous wireless information and power transfer (SWIPT) technology, where the transmit precoders at the AP and the passive beamforming at the IRS are jointly optimized for maximizing the weighted sum-power at EH users \cite{WuSWIPT} and weighted sum-rate  at information receiving users \cite{PanSWIPT}. Physical layer security and outage probability analysis in IRS-assisted MISO networks are  investigated in \cite{Chu} and \cite{Guo}, respectively. Signal-to-noise-ratio (SNR) maximization problem in a self-sustainable single-user IRS-assisted MISO communication system is studied in \cite{GongTwo}, where IRS elements use part of the downlink information signal for harvesting their required energy. 

A survey on recent research efforts in the area of IRS can be found in \cite{Zhao}.

\subsection{Motivations}
Although IRS has lately received significant interests from the research community, it is still at the very early stage of development and more investigations are needed to fully capture the potentials of IRS and make it applicable to practical scenarios. Specifically, the integration of IRS technology with WPCN is a great step toward the realization of efficient and self-sustainable IoT networks, which has not been well investigated in the literature. Recently, a few research works have investigated the application of IRS for improving the performance of WPCNs \cite{LyuTwo,SuzhiBi}. In \cite{LyuTwo}, the authors study  the application of IRS for WPCN performance enhancement, where IRS elements assist in downlink ET from the HAP to the users and uplink IT from users to the HAP. The authors in \cite{SuzhiBi} propose a similar idea to use the IRS as a hybrid energy and information relay, where the user cooperation is also investigated for a two-user WPCN scenario. These preliminary works on the integration of IRS with WPCN provide some insights on the performance enhancements offered by using IRS in WPCNs. However, this integration needs to be studied more deeply with practical considerations for the network setup and network elements. 

One of the most important points that is often overlooked in the studies on IRS is the IRS's power consumption. Although IRS elements passively reflect the incident signals, the power consumption of the IRS cannot be neglected \cite{Huang2018Conf, Huang2019IRS,GongTwo}. However, the majority of the works in this area (e.g., \cite{Taha}-\cite{Guo}) assume that the IRS's power consumption is negligible because it does not perform complex signal processing tasks. In practice, the power consumption of IRS depends on the type and characteristics of its reflecting elements \cite{Huang2018Conf,Huang2019IRS}. For example, the values of each reflecting element's circuit power consumption are 1.5 and 6 mW for 3- and 5-bit resolution phase shifting, respectively \cite{Huang2019IRS}. As the number of IRS elements is typically large, the circuit power consumption of the IRS can be even comparable to its power supply and cannot be neglected.

In the self-sustainable IoT networks, devices are expected to operate in an uninterrupted manner and have theoretically perpetual lifespans. Considering the non-negligible power consumption of IRS elements, it is important to propose efficient strategies which can keep the IRS operational for very long periods. Although embedded batteries can power the IRS temporarily, they cannot be relied on for the long-term functionality and uninterrupted operation of the IRS. Wired charging  may also be unavailable if the IRS is deployed in inaccessible places.
 Thus, equipping IRS elements with EH modules can resolve these issues and make the IRS energy-neutral \cite{Renzo,HuangHolo}. This is our main motivation for studying a self-sustainable IRS-empowered WPCN, where the EH-enabled IRS, powered by energy transmission of the HAP, can act as a hybrid energy and information relay assisting in both downlink ET and uplink IT. 

\subsection{Contributions}
 We study a self-sustainable IRS-empowered multi-user WPCN, where the IRS is equipped with an EH circuit to harvest RF energy from the HAP to power its operations. Inspired by the conventional wireless-powered active relays \cite{Nasir}, time-switching (TS) and power-splitting (PS) schemes are proposed to enable the IRS to harvest energy from the RF signals transmitted by the HAP. In the TS scheme, the ET phase is split into two sub-slots, where the IRS harvests energy in the first sub-slot and assists in the downlink ET to the users in the second sub-slot. Compared to the conventional TS scheme  \cite{Nasir}, the proposed TS scheme can efficiently improve the amount of harvested energy at the users. In the PS scheme, the IRS harvests energy from the HAP's signal and assists in the downlink ET to the users by adjusting its  amplitude reflection coefficients in the ET phase. Compared to the conventional PS scheme \cite{Nasir}, the proposed PS scheme can enhance both ET and IT efficiency and is more spectrum-efficient. To make our study applicable to practical systems, we consider a piece-wise linear EH model for the IRS and the users  to account for the saturation behavior of practical EH systems \cite{Panos}-\cite{Schober}. We investigate the problem of sum-rate maximization for both TS and PS schemes and optimize the IRS phase shift design and network resource allocation jointly with EH time and amplitude reflection coefficients of the IRS.

The main contributions of this paper are summarized as follows:

\begin{itemize}
\item{We propose a self-sustainable IRS-empowered WPCN, where a wireless-powered IRS acts as a hybrid relay to  improve the performance of WPCN in both downlink ET from the HAP to the users and uplink IT from users to the HAP.} 
\item{To enable energy collection and hybrid relaying functionalities at the IRS, we propose  more efficient TS and PS schemes, which can enhance the ET efficiency from the HAP to the users and assist in the uplink information transmission. We consider a piece-wise linear EH model for the IRS and the users, which is mathematically tractable and is able to capture the saturation effect of practical energy harvesters.}
\item{We study the system sum-rate maximization problem for the TS scheme by jointly optimizing the IRS's phase shift designs in both ET and IT phases, time allocation for the IRS and users' EH, time allocation for each user's IT,  and  the users' power allocation. To deal with the non-convexity of the formulated problem, we propose a two-step algorithm to achieve the near-optimal solution: in the first step the  phase shifts for the IT are obtained in closed-form, while an efficient method by using one-dimensional search,
semidefinite relaxation (SDR) and Gaussian randomization is designed for optimizing the IRS phase shifts in the ET phase,  time allocation and  power allocation in the second step. In particular, we obtain a closed-form solution for the optimal IRS's EH time  and discuss its implications. }
\item{We then investigate the sum-rate maximization problem for the PS scheme and jointly optimize the IRS's phase shift design in both ET and IT phases, time allocation for the  EH and IT phases, power allocation at the users, and the amplitude reflection coefficient in the EH phase, using a similar two-step algorithm as for the TS scheme. In particular, we analyze the condition for activating the IRS in the PS scheme and obtain the optimal amplitude reflection coefficient as a function of the EH time, from which some interesting observations are revealed.}
\item{Finally, we evaluate the performance of our proposed schemes via numerical simulations which show that our proposed schemes can achieve significant system sum-rate  gain  compared to the baseline WPCN protocol.}
\end{itemize}

\subsection{Organization}
This paper is organized as follows. Section \ref{sysmod} describes the system model of the proposed IRS-empowered WPCN for both TS and PS schemes. Sections \ref{TSMax} and \ref{PSMax} investigate the sum-rate maximization problems  for TS and PS schemes, respectively. Section \ref{Simulation} evaluates the performance of the presented algorithms by conducting numerical simulations and Section \ref{Conclusion} concludes the paper.

\section{System Model }
\label{sysmod}
As illustrated in Fig. \ref{SystemModel}, we consider an IRS-assisted WPCN, consisting of an HAP with stable power supply, $N$  energy-constrained users (denoted by $U_i,~ i=1,\ldots,N$), and an energy-constrained IRS. The IRS and users are each equipped with an EH circuit (rectenna) to harvest energy and an energy storage to store the harvested energy. The HAP serves as a central control point for the network, which coordinates the transmissions among all devices and also has the capability and constant energy supply for performing computational tasks. The HAP and users have single antenna each.\footnote{The model can be straightforwardly extended to the scenario that the HAP is with multiple antennas, which will be briefly discussed in Remark \ref{MultipleAntannas}.} The IRS is composed of $K$ passive reflecting elements, which can be configured to direct the incident signals to desired directions. The IRS assists in both downlink ET from the HAP to the users and uplink IT from the users to the HAP. The EH and energy/information relaying at the IRS are controlled by an attached micro-controller.

The downlink channels from  the HAP to $U_i$, from the HAP to the IRS, and from the IRS to $U_i$ are denoted by $h_{h,i}$, $\bm{h}_r \in \mathcal{C}^{K \times 1}$, and $\bm{h}_{u,i}^H \in \mathcal{C}^{1 \times K}$, respectively. The counterpart uplink channels are denoted by $g_{h,i}$, $\bm{g}_r^H \in \mathcal{C}^{1 \times K}$, and $\bm{g}_{u,i} \in \mathcal{C}^{K \times 1}$, respectively. All channels are assumed to be quasi-static flat fading, which remain constant during one block but may change from one block to another \cite{WuSWIPT}. We assume that the channel state information (CSI) of all links is perfectly known.\footnote{The CSI of all links can be precisely obtained by existing channel estimation techniques \cite{Mishra,ShuguangCui}.  In the future work, the effect of channel estimation errors on system performance will be investigated.}

\begin{figure}[t]
\centering
\includegraphics[width=3.2 in, height = 2.2 in] {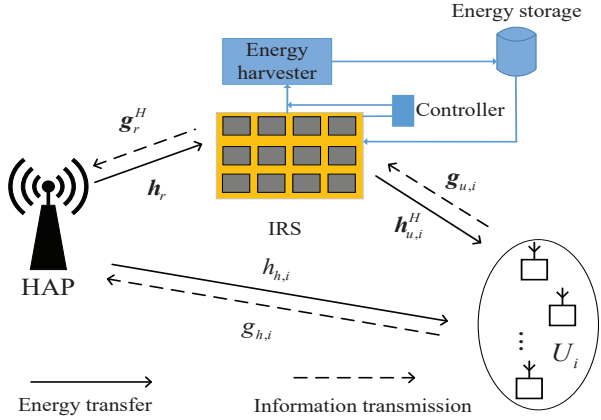}
\caption{System model for an IRS-assisted WPCN.}
\label{SystemModel}
\end{figure} 

\begin{figure}[t]
\centering
\subfigure[ Time-switching scheme.]{
\includegraphics[width=3.4 in, height = 2.2 in]{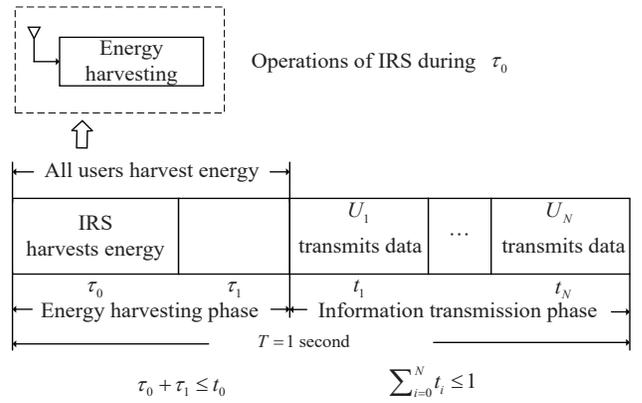}}
\hspace{1in}
\subfigure[Power-splitting scheme.]{
\includegraphics[width=3.4 in, height = 2.2 in]{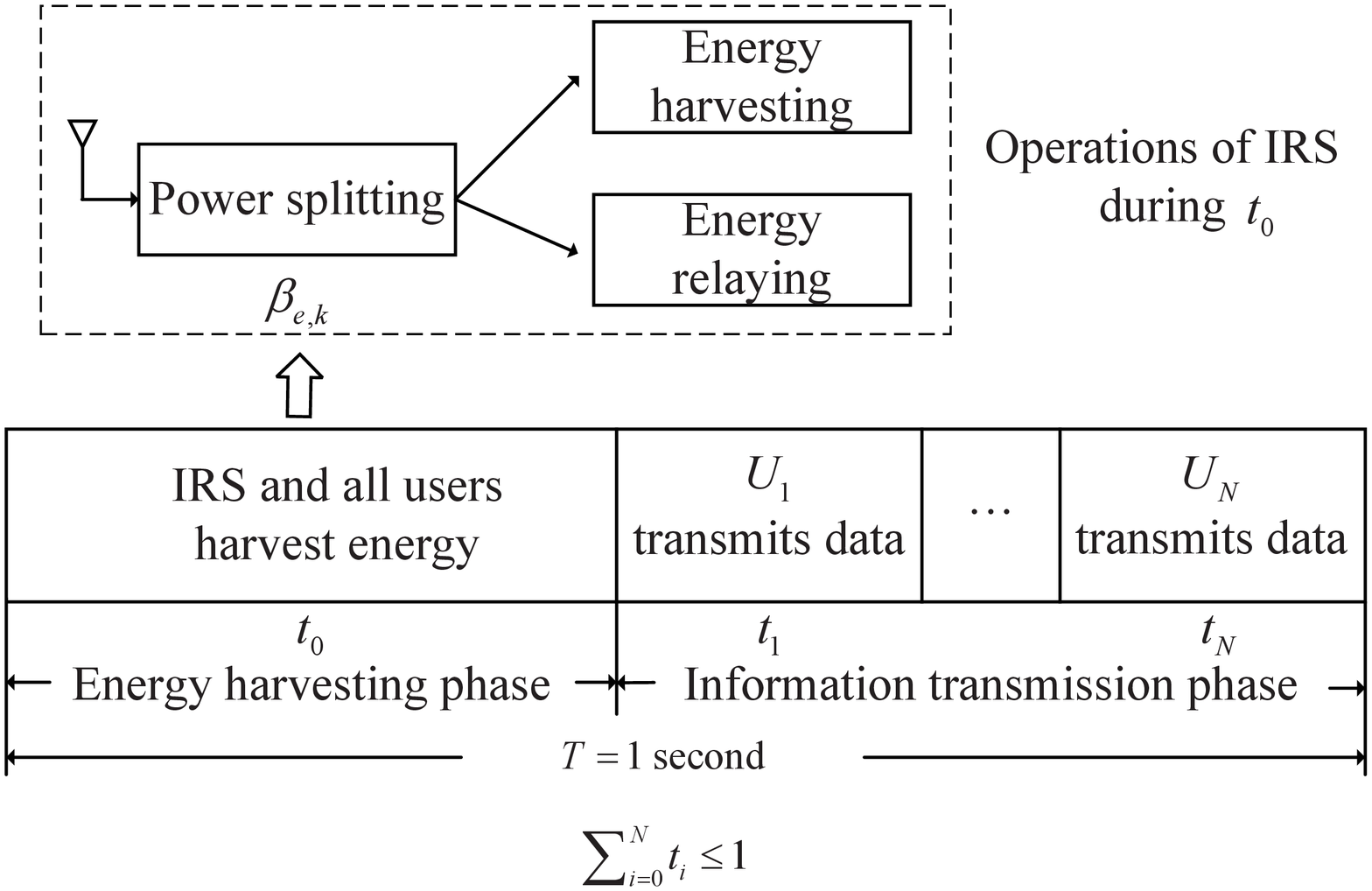}}
\caption{Transmission block structure.}
\label{BlockStructure}
\end{figure}  

The transmission block with a duration of $T$ seconds, is divided into two phases, i.e., ET phase and IT phase. In the ET phase, the HAP transfers energy to the users and IRS in the downlink. The IRS uses the HAP's signals for its own EH and energy relaying to the users. In the IT phase, the users use the harvested energy to transmit data to the HAP with the assistance of the  IRS. Without loss of generality, we consider a normalized unit transmission block time in the sequel, i.e., $T=1$ second.
The details of the ET and IT phases are shown in Fig. \ref{BlockStructure} and elaborated in the following subsections.

\subsection{Energy Transfer Phase}
As mentioned earlier, the IRS is assumed to be energy-constrained, which needs to harvest energy from the HAP for powering its relaying operations. In this regard, we design efficient TS and PS schemes for the IRS.  
\subsubsection{Time-switching scheme}
For the TS scheme, the ET phase with the duration of $t_0$\footnote{The unit of all time coefficients is seconds.} is divided into two sub-slots, having the duration of $\tau_0$ and $\tau_1$, respectively, which satisfy $\tau_0 + \tau_1 \le t_0$. The users can harvest energy over the entire ET phase. For the IRS, it will spend the first sub-slot in the ET phase for its own EH and the second sub-slot for improving the EH efficiency at the users. In particular, in the first sub-slot, all incident signals at the IRS from the HAP are transferred to the EH harvester by setting the amplitude reflection coefficients to be zero, and thus no incident signals will be reflected by the IRS. While
in the second sub-slot, the IRS cooperates with the HAP by adjusting its elements' phase shifts to enhance the total received signal power at the users.
 The transmission block structure for the TS scheme is illustrated in Fig. \ref{BlockStructure} (a).
Denote the transmit signal in the ET phase as 
$x_{h} = \sqrt{P_{h}} s_{h}$,
where $P_{h}$ is the transmit power and $s_{h}$ is the energy-carrying signal with $s_{h} \sim \mathcal{CN}(0,1)$. 

The received signals at the IRS and $U_i$ in the first sub-slot are expressed as
 \begin{align}
&\bm{y}_{r,0} = \bm{h}_r x_{h} + \bm{n}_{r}, \\ 
&y_{ts,0,i} = h_{h,i} x_{h} + n_{u,i},~ i=1,\ldots,N,
 \end{align}
 where $\bm{n}_{r}$ and $n_{u,i}$ denote the additive white Gaussian noises (AWGNs) at the IRS and $U_i$, respectively. Note that the noise power is usually very small and ineffective for EH and can be thus neglected. Hence, the received power at the IRS, denoted by $P_{ts,r}$,  is expressed as $P_{ts,irs} =  P_{h} ||\bm{h}_r ||^2$. Similarly, the  received power at $U_i$ during $\tau_0$ is given by $P_{ts,r,i,0} = P_h |h_{h,i}|^2$.
 
 In the second sub-slot, the IRS assists in the downlink ET.  The phase shift matrix of the IRS during $\tau_1$ is denoted by $\bm{\Theta}_{e} = \sqrt{\rho} \text{diag} \{ \beta_{e,1} e^{j \theta_{e,1}}, \ldots, \beta_{e,K} e^{j \theta_{e,K}} \}$, where $\rho \in (0,1)$  denotes the reflection efficiency and is typically set as a constant \cite{JunZhao}, $\beta_{e,k} \in [0,1]$ and $\theta_{e,k} \in \mathcal{R}$ are the  amplitude reflection coefficient and the phase shift of the $k$-th element, respectively. Let $v_{e,k} = e^{j \theta_{e,k}}$, where $|v_{e,k}|=1$. For the TS scheme, since the IRS only harvests energy during $\tau_0$,  all incident signals at the IRS during $\tau_1$ can be reflected to enhance the EH efficiency, i.e., $\beta_{e,k}=1,~\forall k$ \cite{Wu2020Survey}. Let $\bar{\bm{\Theta}}_e = \text{diag} \{ v_{e,1},\ldots, v_{e,K} \}$. During $\tau_1$, the received signal at $U_i$ for the TS scheme is given by 
 \begin{align}
y_{ts,i} = (\bm{h}_{u,i}^H \sqrt{\rho} \bar{\bm{\Theta}}_{e} \bm{h}_r + h_{h,i}) x_{h} + n_{u,i}, ~i =1,\ldots,N.
 \end{align}
 The received power of $U_i$ during $\tau_1$ is then given by $P_{ts,r,i,1} = P_h |\bm{h}_{u,i}^H \sqrt{\rho} \bar{\bm{\Theta}}_{e} \bm{h}_r + h_{h,i}|^2$.

In practice, the EH circuits usually lead to a non-linear rectification efficiency, i.e., the RF power-to-direct current power conversion is a non-linear function with respect to the received RF power \cite{Boshkovska, Alexandropoulos}. In particular, the harvested power first improves with the increase of received power but finally becomes saturated when the received power is high \cite{Alexandropoulos}.
To approximate the non-linear EH characteristics and account for the saturation region of practical energy harvesters, we employ a two-piece linear EH model,\footnote{There also exist other EH models, e.g., the logistic function based non-linear EH model \cite{Boshkovska} and the multi-piece linear EH model \cite{GuangyueLu}. However, it is noted that the  two-piece linear EH model is sufficiently accurate for modeling the behavior of practical EH circuits. Compared to the logistic function based non-linear EH model, the piece-wise linear EH model is mathematically appealing and easily tractable. In addition, the results obtained from the two-piece linear EH model can be straightforwardly extended to the multi-piece linear EH model.}  which is also widely used in the literature, e.g., \cite{Dong}-\cite{Schober}. According to this model, the harvested power is calculated as 
\begin{align}
P_{h} = \begin{cases}
\eta P_{r}, ~~~\eta P_{r} <P_{sat},  \\ 
P_{sat}, ~~~~~~\text{otherwise},
\end{cases}
\end{align}
where $\eta$ is the EH efficiency in the linear regime,\footnote{In practice, the EH efficiency in this regime is not strictly linear. However, as mentioned in Footnote 4, assuming a constant $\eta$ is still sufficiently accurate for modeling the practical EH circuits.} $P_r$ is the received power, and $P_{sat}$ denotes the saturation power, beyond which there will be no increase in the amount of the harvested power. Therefore, the harvested energy at the IRS and $U_i$ can be obtained as 
\begin{align}
&E_{ts,irs}=\min\{\eta P_{ts,irs}, P_{irs,sat}\} \tau_0 , \\
&E_{ts,u,i}= \min\{\eta P_{ts,r,i,0}, P_{u,i,sat}\} \tau_0 \nonumber \\ 
~~~~&+  \min\{\eta P_{ts,r,i,1}, P_{u,i,sat}\} \tau_1,~i=1,\ldots,N,
\end{align}
where $P_{irs,sat}$ and $P_{u,i,sat}$ represent the saturation power of the IRS and $U_i$, respectively. 

\subsubsection{Power-splitting scheme}
Different from the TS scheme, the dedicated EH time  is not required in the PS scheme and the IRS harvests energy from the HAP by adjusting the amplitude reflection coefficients ($\beta_{e,k},\forall k$)\footnote{Adjusting the reflection coefficient can be achieved by using electronic devices such as positive-intrinsic-negative (PIN) diodes, field-effect transistors (FET), micro-electromechanical system (MEMS) switches, and variable resistor loads \cite{Wu2020Survey,Yang}.}, as illustrated in Fig. \ref{BlockStructure} (b). To be specific, only a part of the HAP's energy signals is fed into the IRS's EH unit for harvesting  and the remaining part is reflected by the IRS to enhance the amount of harvested energy at the users.

It is assumed that all the amplitude reflection coefficients of the IRS elements have the same value, i.e. $\beta_{e,k} = \beta_{e},~ \forall k$.\footnote{In practice, the elements can have different amplitude reflection coefficients. However, the setting will greatly complicate the circuit design of the IRS as different circuits should be integrated to control the amplitude reflection coefficient and phase shift independently at  each element \cite{Wu2020Survey,Yang}. To guarantee the operations of the self-sustainable IRS, we should simplify its circuit design to reduce its circuit power consumption, which can be achieved by setting all amplitude reflection coefficients to be the same.}
  The received signal at $U_i$ in the ET phase for the PS scheme is thus given by
\begin{align}
y_{ps,i} = (\bm{h}_{u,i}^H \sqrt{\rho} \beta_{e}\bar{\bm{\Theta}}_{e} \bm{h}_r  + h_{h,i} ) x_{h} + n_{u,i}, ~i= 1, \ldots,N.
\end{align}
The harvested energy of the IRS and $U_i$ for the PS scheme is then given by 
\begin{align}
&E_{ps,irs} = \min\{\eta P_{h} (1-\beta_{e}^2)  || \bm{h}_r ||^2, P_{irs,sat}\} t_0, \\ 
& E_{ps,u,i} = \min \{\eta P_{h}  | \bm{h}_{u,i}^H \sqrt{\rho} \beta_{e} \bar{\bm{\Theta}}_{e} \bm{h}_r + h_{h,i}  |^2, P_{u,i,sat}\} t_0.
\end{align}

\subsection{Information Transmission Phase}
In the IT phase, the users transmit information to the HAP via time division multiple access, using the harvested energy in the ET phase. Denote the duration of IT for $U_i$ as $t_i$. Let $s_{u,i}$ be the information-carrying signal of $U_i$ with unit power. The transmit signal of $U_i$ during $t_i$ is then expressed as 
$x_{u,i} = \sqrt{P_{u,i}} s_{u,i}$,
where $P_{u,i}$ is $U_i$'s transmit power and satisfies 
\begin{align}
P_{u,i} t_i + P_{c,i} t_i \le E_{f,u,i},~ f = \{ts,ps\},
\end{align} 
with $P_{c,i}$ being the circuit power consumption of $U_i$. As the amplitude reflection coefficients are set to be the same, the IRS's circuit power consumption is mainly caused by performing each element's phase shifting \cite{Huang2018Conf,Huang2019IRS}. The other power consumptions, such as powering the EH circuit and signaling overhead,  can be considered to be negligible \cite{GongTwo,Nasir,Derrick}.  By denoting the power consumption of each element as $\mu$, the circuit power consumption of the IRS is thus expressed as $K \mu$.  To power its operations, IRS needs to harvest sufficient energy in the ET phase.  We assume that all the harvested energy stored in the energy storage can be used to power the IRS' circuits, the following constraints are thus held:
\begin{align}
&K \mu ( \tau_1 + \sum_{i=1}^N t_i) \le E_{ts,irs}, \\ 
& K \mu (t_0 + \sum_{i=1}^N t_i) \le E_{ps,irs},
\end{align}for TS and PS schemes, respectively. Note that the power consumption of the IRS in the first sub-slot of the TS scheme is neglected because the IRS's power consumption is mainly determined by the reflection operation \cite{Huang2018Conf,Huang2019IRS}, which do not take place during $\tau_0$.

Denote  the  phase shift of the $k$-th element  for $U_i$'s IT  as $\theta_{d,i,k} \in \mathcal{R}$. Then,  the phase shift matrix during $t_i$ is denoted by $\bm{\Theta}_{d,i}$, where $\bm{\Theta}_{d,i} = \sqrt{\rho} \text{diag} \{ v_{d,i,1}, \ldots, v_{d,i,K} \}$, $v_{d,i,k} = e^{j \theta_{d,i,k}}$,   and $|v_{d,i.k}|=1$. Note that we have set the  amplitude reflection coefficients to be 1 to maximize the signal reflection in the IT phase \cite{Wu2020Survey}. The received signal at the HAP from $U_i$, denoted by $y_{h,i}$, is thus given by
\begin{align}
y_{h,i} = (\bm{g}_r^H \bm{\Theta}_{d,i} \bm{g}_{u,i} + g_{h,i} ) \sqrt{P_{u,i} } s_{u,i} + n_h,
\end{align}
where $n_h \sim \mathcal{CN} (0, \sigma_h^2) $ is the AWGN at the HAP.
The SNR at the HAP during $t_i $, denoted by $\gamma_{i}$, is expressed as $\gamma_i = \frac{P_{u,i} |\bm{g}_r^H \bm{\Theta}_{d,i} \bm{g}_{u,i} + g_{h,i}|^2 } {\sigma_h^2}$.
The achievable rate from $U_i$ to the HAP in bits/second/Hz is then formulated as
\begin{align}
\label{AchievableRate}
R_i = t_i \log_2 \left(1 + \frac{P_{u,i} |\bm{g}_r^H \bm{\Theta}_{d,i} \bm{g}_{u,i} + g_{h,i}|^2 } {\sigma_h^2} \right).
\end{align}

\section{Sum-rate maximization for the TS scheme}
\label{TSMax}
In this section, we aim to maximize the system sum-rate by jointly optimizing the phase shift design at the IRS in both ET and IT phases, time scheduling of the network, and power allocation at the users. The constraints for the TS scheme are given as follows: $\text{C1:}~ K \mu ( \tau_1 + \sum_{i=1}^N t_i) \le E_{ts,irs}$, $\text{C2:}~ P_{u,i} t_i + P_{c,i} t_i \le E_{ts,u,i},~\forall i$, $\text{C3:}~ \tau_0 + \tau_1 \le t_0$, $\text{C4:}~ \sum_{i=0}^N t_i \le 1$, $\text{C5:}~ \tau_0, \tau_1 \ge 0$, $\text{C6:}~ t_i \ge 0, ~\forall i$, $\text{C7:}~ P_{u,i} \ge 0,~\forall i$,  $\text{C8:}~ |v_{e,k}| =1, ~\forall k$, $\text{C9:} ~ |v_{d,i,k}| = 1, ~\forall i,~\forall k$. The optimization problem is formulated as 
\begin{equation}\tag{$\textbf{P1}$} 
\begin{aligned}
\max_{\bm{\Theta}_{e}, \{\bm{\Theta}_{d,i}\}_{i=1}^N, \bm{t}, \bm{\tau}, \bm{P}_u } ~  &\sum_{i=1}^N R_{i}, \\ 
\text{s.t.}~~~~~~~~
 &\text{C1}-\text{C9},
\end{aligned}
\end{equation}
where $\bm{t} = [t_0,t_1,\ldots,t_N]$, $\bm{\tau} = [\tau_0,\tau_1]$, and $\bm{P}_u = [P_{u,1}, \ldots, P_{u,N}]$.

\subsection{Near-optimal solution to \textbf{P1}}
It is obvious that \textbf{P1} is a non-convex optimization problem due to the coupling of variables in the objective function and the constraints, and convex optimization techniques cannot be used to solve it directly. In the following, we propose a two-step algorithm to solve the sum-rate maximization problem in \textbf{P1}. Specifically, we first obtain the optimal phase shifts for the IT in closed-form and then propose an efficient algorithm to solve the simplified problem.
 
\subsubsection{Optimal phase shift design for IT}
We first present a proposition for the optimal design of phase shifts of the IRS for the IT.
\begin{proposition}
\label{LemmaOne}
The optimal IRS phase shifts for the IT during $t_i$ ($i=1,\ldots,N$) are given by
\begin{align}
\theta_{d,i,k}^* = \arg(g_{h,i}) - \arg(\bm{g}_{r,k}^H) - \arg(\bm{g}_{u,i,k}),~k=1,\ldots,K,
\end{align}
where $\bm{g}_{r,k}^H $ is the $k$-th element of $\bm{g}_{r}^H$, $\bm{g}_{u,i,k}$ is the $k$-th element of $\bm{g}_{u,i}$, and  $\arg(x)$ represents the phase of $x$.
\end{proposition}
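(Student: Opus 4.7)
The plan is to observe that, although \textbf{P1} is jointly non-convex, the variables $\{\bm{\Theta}_{d,i}\}_{i=1}^{N}$ appear only in the achievable rates $R_i$ through the term $|\bm{g}_r^H \bm{\Theta}_{d,i}\bm{g}_{u,i}+g_{h,i}|^2$ inside the logarithm of \eqref{AchievableRate}, and the only constraint tying $\bm{\Theta}_{d,i}$ is the unit-modulus condition $\text{C9}$. Crucially, $\bm{\Theta}_{d,i}$ does not enter $\text{C1}$--$\text{C8}$ and does not influence $E_{ts,u,i}$, $E_{ts,irs}$, $t_i$ or $P_{u,i}$. Hence, for any feasible choice of the remaining variables, the outer problem decouples into $N$ independent sub-problems, and since $\log_2(1+x)$ is strictly increasing in $x\ge 0$, each sub-problem reduces to maximizing the effective channel gain $|\bm{g}_r^H \bm{\Theta}_{d,i}\bm{g}_{u,i}+g_{h,i}|^2$ subject to $|v_{d,i,k}|=1$ for all $k$.

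Next I would expand the composite channel into its element-wise form. Writing $\bm{\Theta}_{d,i}=\sqrt{\rho}\,\text{diag}\{v_{d,i,1},\ldots,v_{d,i,K}\}$, one obtains
\begin{equation}
\bm{g}_r^H \bm{\Theta}_{d,i}\bm{g}_{u,i}+g_{h,i}
=\sqrt{\rho}\sum_{k=1}^{K} \bm{g}_{r,k}^H\,\bm{g}_{u,i,k}\,v_{d,i,k}+g_{h,i}.
\end{equation}
Taking absolute values and applying the triangle inequality twice gives the upper bound
\begin{equation}
\bigl|\bm{g}_r^H \bm{\Theta}_{d,i}\bm{g}_{u,i}+g_{h,i}\bigr|
\le \sqrt{\rho}\sum_{k=1}^{K}\bigl|\bm{g}_{r,k}^H\bigr|\,\bigl|\bm{g}_{u,i,k}\bigr|+|g_{h,i}|,
\end{equation}
where the first application aligns the aggregate reflected path with the direct path $g_{h,i}$, and the second application aligns the $K$ individual reflected contributions with one another. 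Because $|v_{d,i,k}|=1$, the bound is achieved if and only if every summand $\bm{g}_{r,k}^H\bm{g}_{u,i,k}v_{d,i,k}$ is a positive real multiple of $g_{h,i}$, i.e., carries the same phase $\arg(g_{h,i})$.

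Setting the phase condition $\arg(\bm{g}_{r,k}^H)+\arg(\bm{g}_{u,i,k})+\theta_{d,i,k}=\arg(g_{h,i})$ and solving for $\theta_{d,i,k}$ yields exactly the claimed formula $\theta_{d,i,k}^{*}=\arg(g_{h,i})-\arg(\bm{g}_{r,k}^H)-\arg(\bm{g}_{u,i,k})$. Feasibility with respect to $\text{C9}$ is immediate since $|e^{j\theta_{d,i,k}^*}|=1$. The argument is independent of the values of $t_i$, $P_{u,i}$, $\bm\tau$ and $\bm{\Theta}_e$, so the closed-form solution is globally optimal with respect to $\bm{\Theta}_{d,i}$ for \textbf{P1}. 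I do not anticipate a substantive obstacle: the only subtlety is noticing the clean decoupling of $\bm{\Theta}_{d,i}$ from the energy-balance constraints $\text{C1}$--$\text{C2}$, after which the result is a standard coherent-combining/maximum-ratio-type argument via the triangle inequality.
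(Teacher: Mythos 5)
Your proposal is correct and follows essentially the same route as the paper: decouple $\bm{\Theta}_{d,i}$ from the remaining variables (it enters only the rate $R_i$ through $|\bm{g}_r^H\bm{\Theta}_{d,i}\bm{g}_{u,i}+g_{h,i}|^2$ and only constraint C9), then co-phase every reflected path with the direct link $g_{h,i}$. The only cosmetic difference is that you establish the element-wise alignment self-containedly via the triangle inequality and its equality condition, whereas the paper first aligns the aggregate reflected term with $g_{h,i}$ via a cosine expansion and then cites the known solution of the resulting unit-modulus subproblem from \cite{WuIRS}; both yield the same closed-form phases.
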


\begin{proof}
Refer to Appendix \ref{App:LemmaOne}.
 \end{proof}

 \begin{remark}
\label{RemarkITEnhancement}
From Proposition \ref{LemmaOne}, we can find that there always exists a positive scalar $\delta$ satisfying $|\bm{g}_r^H \bm{\Theta}_{d,i}^* \bm{g}_{u,i}| = \delta |g_{h,i}|$, where $\bm{\Theta}_{d,i}^*$ is  obtained in Proposition \ref{LemmaOne}.
Hence, the received SNR at the HAP during $t_i$ with the assistance of the IRS can be enhanced up to $(1+ \delta)^2$ compared with that without IRS. 
 \end{remark}

\subsubsection{Optimizing phase shift design for ET, time scheduling, and power allocation}
\label{IIIA2}
According to Proposition \ref{LemmaOne}, \textbf{P1} can be simplified as
\begin{equation}\tag{$\textbf{P2}$} 
\begin{aligned}
 \max_{\bm{\Theta}_{e}, \bm{t}, \bm{\tau},   \bm{P}_u} ~ &\sum_{i=1}^N t_i \log_2(1 +  \frac{P_{u,i} \bar{\gamma}_i} {\sigma_h^2}), \\ 
\text{s.t.} ~~& \text{C1}-\text{C8},
\end{aligned}
\end{equation}
where $\bar{\gamma}_i = |\bm{g}_r^H \bm{\Theta}_{d,i}^* \bm{g}_{u,i} + g_{h,i}|^2$. Note that solving \textbf{P2} is equivalent to solving \textbf{P1}.
\textbf{P2} is still non-convex because the variables are coupled in the objective function and the constraints.
To make \textbf{P2} tractable, we introduce $\bm{e}_u=[e_{u,1},\ldots,e_{u,N}]$, where $e_{u,i} = P_{u,i} t_i,~ \forall i$ and set $\bm{\psi}_i = \sqrt{\rho}\text{diag}(\bm{h}_{u,i}^H ) \bm{h}_r$.
Let $\bm{v}_{e} = [v_{e,1}, \ldots, v_{e,K}]^H$, $\bar{\bm v}_{e} = [\bm{v}_{e}^H,1]^H$ and $\bm{V}_{e} = \bar{\bm{v}}_{e} \bar{\bm{v}}_{e}^H$, where $\bm{V}_{e} \succeq 0 $ and $\text{rank} (\bm{V}_{e}) = 1$. 
Based on these new variables, the constraint C2 is recast as  follows:
\begin{align}
& \text{C10:}~e_{u,i} + P_{c,i} t_i \le \min\{\eta P_h |h_{h,i}|^2, P_{u,i,sat} \}\tau_0  \nonumber   \\
&+ \min \Big \{\eta P_h \Big[\text{Tr}(\bm{R}_{e,i} \bm{V}_{e})  + |h_{h,i}|^2 \Big],P_{u,i,sat} \Big\} \tau_1,~\forall i,
\end{align}
where 
$${\bm{R}}_{e,i} = 
\begin{bmatrix}
 \bm{\psi}_{i} \bm{\psi}_i^H & \bm{\psi}_i h_{h,i}^H \\ 
 \bm{\psi}_i^H h_{h,i} & 0 
\end{bmatrix}.
$$
Then, \textbf{P2} can be equivalently rewritten as
\begin{equation}\tag{$\textbf{P2.1}$} 
\begin{aligned}
\max_{\bm{t}, \bm{\tau}, \bm{V}_{e}, \bm{e}_u} ~  &\sum_{i=1}^N t_i \log_2(1 +  \frac{e_{u,i} \bar{\gamma}_i} {t_i \sigma_h^2}), \\ 
\text{s.t.} ~~& \text{C1},~ \text{C3}-\text{C6},~\text{C10}, \\
&\text{C11:}~  e_{u,i} \ge 0,~ \forall i, \\ 
& \text{C12:}~\bm{V}_{e} \succeq 0,  \\ 
& \text{C13:}~ \bm{V}_{e,k,k} = 1, \forall k, \\
&\text{C14:}~\text{rank} (\bm{V}_{e}) = 1.
\end{aligned}
\end{equation}
Due to the rank-one constraint in C14 and coupling of ${\bm{V}}_{e}$ and $\tau_1$ in C10, \textbf{P2.1} is still non-convex and difficult to be solved directly. However, it is straightforward to obtain the optimal duration of the first sub-slot in the ET phase, i.e., $\tau_0$, as  stated in the following proposition.

\begin{proposition}
\label{ProOptiTime}
The optimal duration of the first sub-slot in the ET phase can be obtained as 
\begin{align}
\label{Optitau0}
\tau_0^* = \frac{K \mu}{K \mu + \min \{\eta P_h ||\bm{h}_r||^2,P_{irs,sat}   \} }.
\end{align}
\end{proposition}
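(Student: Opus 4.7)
The plan is to argue that at any optimum of \textbf{P2.1} the three time/energy constraints C1, C3 and C4 are simultaneously active; once this is established, $\tau_0^{\ast}$ drops out of a two-equation algebraic system. Note that the claimed value depends only on the system parameters $K$, $\mu$, $\eta$, $P_h$, $\|\bm{h}_r\|^{2}$ and $P_{irs,sat}$, so $\tau_0$ can be fixed at \eqref{Optitau0} before the remaining decision variables of \textbf{P2.1} are optimized.

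I would first dispose of C3 and C4 by standard ``no-idle-time'' perturbations. If $\sum_{i=0}^{N}t_i<1$ at a putative optimum, enlarging any positive $t_i$ (with $e_{u,i}$ held fixed) strictly improves the objective, since $t_i\log_2\!\bigl(1+e_{u,i}\bar{\gamma}_i/(t_i\sigma_h^{2})\bigr)$ is strictly increasing in $t_i$ for $e_{u,i}>0$. Similarly, if $\tau_0+\tau_1<t_0$ one can shrink $t_0$ and hand the slack to some $t_i$, once more strictly improving the objective. Therefore at optimum $\tau_0+\tau_1=t_0$ and $t_0+\sum_{i=1}^{N}t_i=1$, which combine to $\tau_1+\sum_{i=1}^{N}t_i=1-\tau_0$.

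The crux is to show that C1 is tight. Let $P_{h}^{\mathrm{irs}}=\min\{\eta P_h\|\bm{h}_r\|^{2},P_{irs,sat}\}$ denote the harvested-power rate of the IRS. Suppose for contradiction that $K\mu(\tau_1+\sum_i t_i)<P_{h}^{\mathrm{irs}}\tau_0$ at some optimum, and perturb by $\tau_0\to\tau_0-\epsilon$, $\tau_1\to\tau_1+\epsilon$ for small $\epsilon>0$. Constraint C3 is preserved (same $t_0$), C4 is untouched, and for sufficiently small $\epsilon$ the slackened C1 continues to hold. Since $\beta_{e,k}=1$ during $\tau_1$ and the phases are freely chosen to maximize the sum rate, the per-user coefficient of $\tau_1$ in the right-hand side of C10 is no smaller than the coefficient of $\tau_0$ (otherwise reselecting $\bar{\bm{\Theta}}_e$ could only help). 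Hence the perturbation only enlarges each C10 right-hand side, opening slack that can be absorbed by raising some $e_{u,i}$, which strictly increases $R_i$---contradicting optimality. Therefore C1 must hold with equality.

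Finally, combining the tight C1 equation $K\mu(1-\tau_0)=P_{h}^{\mathrm{irs}}\tau_0$ with $\tau_1+\sum_{i=1}^{N}t_i=1-\tau_0$ yields
\[
\tau_0^{\ast}=\frac{K\mu}{K\mu+P_{h}^{\mathrm{irs}}}=\frac{K\mu}{K\mu+\min\{\eta P_h\|\bm{h}_r\|^{2},P_{irs,sat}\}},
\]
matching \eqref{Optitau0}. The main technical obstacle I anticipate is the C1-tightness step: one must verify that the $\tau_0\!\to\!\tau_1$ swap really creates strict slack in at least one C10 constraint, which requires handling both the linear and saturation regimes of the two-piece EH model and arguing cleanly that the optimal passive beamforming does not render the IRS-assisted EH worse than the direct-link-only EH. In the degenerate case where every user's EH is already saturated during both sub-slots (so C10 cannot be strictly relaxed), the freed time can instead be pushed into an IT slot $t_i$, which directly raises $R_i$ and closes the loop.
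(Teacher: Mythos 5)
Your overall plan---show C1, C3, C4 are all tight at an optimum of \textbf{P2.1} and then solve the resulting linear system---is exactly the paper's route, and your key C1 step (move $\epsilon$ from $\tau_0$ to $\tau_1$ and argue the users' harvested energy cannot decrease) is essentially the paper's own argument, including the same implicit per-user assumption that the IRS-assisted sub-slot is at least as good as the direct-only sub-slot for every user (a claim that does not follow merely from ``reselecting $\bar{\bm{\Theta}}_e$ could only help,'' since the ET phase shifts are common to all users; the paper glosses over the same point, and you at least flag it). The genuine gap is in your ``no-idle-time'' step for C3 and C4: you hand the idle time to some IT slot $t_i$ with $e_{u,i}$ held fixed. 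But at any optimum C10 is tight for every user with $t_i>0$ (the objective is strictly increasing in $e_{u,i}$), and the left-hand side of C10 contains the circuit term $P_{c,i}t_i$; hence increasing $t_i$ while freezing $e_{u,i}$ violates C10 whenever $P_{c,i}>0$, which is the regime the paper assumes. Rebalancing by reducing $e_{u,i}$ by $P_{c,i}\epsilon$ does not rescue the perturbation either: for a fixed harvested-energy budget $E_i$ the rate $t_i\log_2\bigl(1+(E_i-P_{c,i}t_i)\bar{\gamma}_i/(t_i\sigma_h^2)\bigr)$ is not monotone in $t_i$ (this is precisely why Proposition 3 yields an interior optimal $t_i^*$), so granting a user more transmission time can strictly lower its rate. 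The same objection hits your degenerate saturated case, where you again propose pushing freed ET time into an IT slot.

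The repair is to push slack in the direction the paper does, i.e., toward the ET side rather than the IT side: if C3 were slack, enlarge $\tau_0$ with $t_0$ fixed, which only enlarges the right-hand sides of C1 and C10 and lets you raise some $e_{u,i}$, strictly improving the objective; if C4 were slack, enlarge $t_0$ and then $\tau_0$ and/or $\tau_1$ in the same way. With that substitution, your final algebra---combining $K\mu(\tau_1+\sum_{i=1}^N t_i)=\min\{\eta P_h\|\bm{h}_r\|^2,P_{irs,sat}\}\,\tau_0$ with $\tau_0+\tau_1+\sum_{i=1}^N t_i=1$---is identical to the paper's derivation of \eqref{Optitau0}.
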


\begin{proof}
Refer to Appendix \ref{App:ProOptiTime}.
\end{proof}

\begin{remark}
From Proposition \ref{ProOptiTime},  we can observe that the duration of the first sub-slot in the ET phase is mainly determined by the IRS's setting, e.g., the number of reflecting elements, each element's circuit power consumption, and the saturation power for the EH. With a higher circuit power consumption for each element, the IRS needs more time to harvest sufficient energy, which causes a shorter time for other network operations, e.g., users' EH with the assistance of IRS and users' IT. Furthermore, if the saturation power of IRS ($P_{irs,sat}$) is small such that $\eta P_h ||\bm{h}_r ||^2 \ge P_{irs, sat}$, adding more reflecting elements will increase the IRS's circuit power consumption, which subsequently increases the EH time. Otherwise if $\eta P_h ||\bm{h}_r||^2 < P_{irs, sat}$, increasing the number of elements provides additional transmission links between the HAP and IRS, and thus more energy from the HAP can be transferred to the IRS. Therefore, if the increase of the IRS's circuit power consumption is smaller than that of its harvested power, the EH time of IRS can even be reduced by increasing the number of elements.
\end{remark}

We now proceed to solve \textbf{P2.1} with $\tau_0^*$ obtained in Proposition \ref{ProOptiTime}. For solving \textbf{P2.1}, we first fix $\tau_1$ and optimize time and energy allocation in the IT phase as well as the IRS phase shift design for the ET phase. We can then find the optimal $\tau_1$ by a one-dimensional search over $[0,1-\tau_0)=\big[0, 1- \frac{K \mu} {  K \mu + \min\{\eta P_h ||\bm{h}_r||^2,P_{irs,sat}\} }  \big)$. Denote $\bar{\bm {t}}=[t_1,...,t_N]$.
With fixed $\tau_1$, \textbf{P2.1} is reformulated as 
\begin{equation}\tag{$\textbf{P2.2}$} 
\begin{aligned}
\max_{\bar{\bm{t}}, \bm{V}_{e}, \bm{e}_u} ~  &\sum_{i=1}^N t_i \log_2(1 +  \frac{e_{u,i} \bar{\gamma}_i} {t_i \sigma_h^2}), \\ 
\text{s.t.} ~~&  \text{C6},~\text{C10}-\text{C14}, \\ 
& \sum_{i=1}^N t_i \le 1- \tau_0^* - \tau_1. 
\end{aligned}
\end{equation}

\textbf{P2.2} is still non-convex  due to the rank-one constraint in C14, and its globally optimal solution is thus difficult to obtain. However, using the semidefinite relaxation (SDR) technique \cite{Luo}, we can relax the rank-one constraint to obtain a convex semidefinite programming (SDP) problem \cite{BoydOne}, which can be optimally solved using convex optimization tools, e.g., CVX \cite{BoydTwo}. However, the solution obtained for the relaxed version of \textbf{P2.2} by CVX may not satisfy the rank-one constraint. The Gaussian randomization method is then employed to construct a rank-one solution to \textbf{P2.2} from the solution obtained by CVX. Note that the constructed rank-one solution can be a near-optimal solution to \textbf{P2.2} as it is constructed and searched with quite large times of randomization \cite{SPR}.

Denote the optimal solution to the relaxed problem as  $\{ t_1^*,\ldots,t_N^*, \bar{e}_{u,1},\ldots,\bar{e}_{u,N}, \bar{\bm{V}}_e \}$. The singular value decomposition (SVD) of $\bar{\bm{V}}_{e}$ is expressed as $\bar{\bm{V}}_{e} = \bm{U}_{e} \bm{\varSigma}_{e} \bm{U}_{e}^H$, where $\bm{U}_{e} \in \mathcal{C}^{(K+1) \times (K+1)}$  and $\bm{\varSigma}_{e} \in \mathcal{C}^{(K+1) \times (K+1)}$ are the unitary matrix and diagonal matrix, respectively. Then, the approximate solution to \textbf{P2.2}, denoted by $\hat{\bm{v}}_e$, can be constructed as follows
\begin{align}
\label{RandomVectorTwo}
\hat{\bm{v}}_e = \bm{U}_e \sqrt{\bm{\Sigma_e}} \bm{r}_e,
\end{align}
where $\bm{r}_{e}$ is a random vector with  $\bm{r}_{e} \sim \mathcal{CN} (\bm{0}, \bm{I}_{K+1} )$. Note that as the objective function is an increasing function of $e_{u,i}$, C10 must be an equality at the optimal solution. Therefore, based on the generated random vectors, the energy allocation of the users is computed as
\begin{align}
\label{RecomputedEnergy}
\hat{e}_{u,i} &= \Big( \min\Big\{\eta P_h |h_{h,i}|^2, P_{u,i,sat}\Big\} \tau_0^* +  \nonumber \\ 
&\min\Big\{\eta P_h \Big[\text{Tr}(\bm{R}_{e,i} \hat{\bm{v}}_e \hat{\bm{v}}_e^H) + |h_{h,i}|^2 \Big],P_{u,i,sat}\Big\} \tau_1  \nonumber \\
& - P_{c,i} t_i^*\Big)^+, ~\forall i,
\end{align}
where $(x)^+$ means $\max (x,0) $. We generate $D$ times of random vectors and compute the corresponding objective function values for \textbf{P2.2}. The near-optimal solution to \textbf{P2.2}, denoted by $\hat{\bm{v} }_e^*$, is the one achieving the maximum objective function value. The near-optimal $\bm{v}_e^*$, is finally recovered by
\begin{align}
\bm{v}_e^* = e^{j \arg \Big ( \Big [\frac{\hat{\bm{v}}_e^*}{\hat{\bm{v}}_{e,K+1}^* } \Big ]_{(1:K)} \Big ) },
\end{align}
where $[\bm{\omega}]_{(1:M)}$ represents that the first $M$ elements of $\bm{\omega}$ are taken, $\hat{\bm{v}}_{e,K+1}^*$ denotes the $(K+1)$-th element of $\hat{\bm{v}}_e^*$. It has been numerically and mathematically proved in the literature that the SDR technique followed by Gaussian randomization can provide a good approximation of the optimal solution (see \cite{Luo} and the references therein).

The procedure for solving the sum-rate maximization problem for the TS scheme is summarized in Algorithm  \ref{Alg:Two}, in which the two steps are implemented sequentially. According to \cite{Luo}, the worst-case computational complexity of  Algorithm \ref{Alg:Two} is $\mathcal{O}(M \max(K,N)^4 K^{0.5} \log(1/\epsilon) + MDN )$, where $\epsilon$ is the computational accuracy for using the interior-point method in CVX, and $M$ is the number of iterations for updating $\tau_1$. 
By running Algorithm \ref{Alg:Two} at the HAP, we can obtain the globally-optimal solution for \textbf{P1} approximately,  the analysis of which is given as follows. First, we can obtain the optimal phase shifts for the IT and the optimal duration of the first sub-slot in the ET phase, which are the globally optimal solutions to \textbf{P1}. Second, the globally optimal duration of the second sub-slot in the ET phase can also be found by setting an  appropriate step size. Third, the SDR technique followed by quite large times of randomization based on the Gaussian randomization scheme can guarantee at least $\frac{\pi}{4}$ approximation of the maximum objective function value achieved by solving \textbf{P2.2} \cite{SPR}.

\begin{algorithm}
\caption{ The Algorithm for Solving \textbf{P1}.}
\label{Alg:Two}
\begin{algorithmic}[1] 
\STATE{Initialize $D$ and the step size $\Delta$. Let $\tau_1 = 0$.  }
\STATE{Find the optimal phase shifts for the IT from Proposition \ref{LemmaOne} and optimal $\tau_0$ from Proposition \ref{ProOptiTime}.}
\WHILE{$\tau_1 < 1- \frac{K \mu} {  K \mu + \min\{\eta P_h ||\bm{h}_r||^2,P_{irs,sat}\} }$}
\STATE Solve the relaxed version of \textbf{P2.2} with fixed $\tau_1$ and obtain its optimal solution $\bar{\bm{V}}_{e}$.
\STATE Compute the SVD of $\bar{\bm{V}}_{e}$ and obtain $\bm{U}_{e} $ and $\bm{\varSigma}_{e}$.
 \FOR{$\bar{D}=1:D$}
    \STATE{Generate $\hat{\bm{v}}_{e}$ by \eqref{RandomVectorTwo} and find $\hat{e}_{u,i},~\forall i$ from \eqref{RecomputedEnergy}. }
         \STATE Calculate the objective function value of \textbf{P2.2} and denote it by $R_{\text{sum}}(\bar{D})$.
 \ENDFOR
 \STATE{Set $R^*(\tau_1)=\max R_{\text{sum}}$}.
  \STATE $\tau_1 = \tau_1 +\Delta$.
\ENDWHILE
  \STATE {Set $\tau_1^*=\arg \max_{\tau_1 } R^* $, $\bm{v}_e^*=\bm{v}_e^*(\tau_1^*)$, $\bm{e}_{u}^*=\bm{e}_{u}^* (\tau_1^*)$, $\bm{\bar{t}}^*=\bm{\bar{t}}^* (\tau_1^*)$. }
  \STATE {Set $P_{u,i}^*=e_{u,i}^*/t_i^*,~\forall i$ and extract $\theta_{e,k}^*,~\forall k$ from $\bm{v}_e^*$}.
 \end{algorithmic}
\end{algorithm}

\begin{remark}
\label{MultipleAntannas}
For the multi-antenna HAP scenario, the sum-rate maximization can be achieved by jointly optimizing the phase shift designs, time scheduling, power allocation at the users, and transmit beamforming at the HAP, denoted by $\bm{w}_h$. The formulated problem can be solved by a similar two-step algorithm, where the problem is decoupled into two sub-problems. In particular, the first sub-problem optimizes the phase shifts for the IT, which can be solved by performing the SDR and Gaussian randomization.  The second sub-problem can be solved by using the block coordinate descent (BCD) techniques. Specifically, we can divide the variables into two blocks, i.e., $\bm{w}_h$ and $\{\bm{t}, \bm{\tau}, \bm{\theta}_e, \bm{P}_u \}$, and iteratively optimize one of them with the other one fixed in an alternating manner. A sub-optimal solution can be finally obtained by sequentially solving the two sub-problems. However, the algorithm's complexity for this scenario is much higher, i.e., $\mathcal{O}\Big( \Big(M \max(K,N)^4 K^{0.5} \log(1/\epsilon) + MDN + \min\{N,Q\}^4 Q^{0.5} \log(1/\epsilon) \Big) \log(1/ \varsigma)  \Big)$, where $Q$ is the number of antennas at the HAP, and $\varsigma$ is the designed accuracy for the BCD method. Hence, this scenario causes a much higher implementation cost and is not appropriate for the self-sustainable IoT network.
\end{remark}

\subsection{Random phase shifts with optimized resource allocation for the TS scheme}
\label{RndPhase}
To reduce the computational complexity and show more insights about resource allocation, we consider a special case with random design of phase shifts and focus on the time and power allocation optimization in the IRS-assisted WPCN. As will be shown in Section \ref{Simulation}, using IRS is beneficial for improving the performance of WPCN even with randomly designed phase shifts \cite{WuIRS,Arun}. Letting $e_{u,i} = P_{u,i} t_i $, we have 
\begin{align}
\text{C15:}~ e_{u,i} + P_{c,i} t_i &\le  \min\{\eta P_{ts,r,i,0}, P_{u,i,sat}\} \tau_0 \nonumber \\
&+  \min\{\eta P_{ts,r,i,1}, P_{u,i,sat}\} \tau_1,~\forall i,
\end{align}and the sum-rate maximization problem with random phase shifts is formulated as 
\begin{equation}\tag{$\textbf{P3}$} 
\begin{aligned}
\max_{\bm{t}, \bm{\tau}, \bm{e}_u } ~&  \sum_{i=1}^N t_i \log_2(1+ \frac{\gamma_{d,i}}{\sigma_h^2} \frac{e_{u,i} }{t_i }), \\ 
\text{s.t.} \ \ & \text{C1},~\text{C3}-\text{C6}, ~\text{C11},~\text{C15}, 
\end{aligned}
\end{equation}where $\gamma_{d,i} =|\bm{g}_r^H \bm{\Theta}_{d,i} \bm{g}_{u,i} + g_{h,i}|^2$.
The constraint C15 is an equality at the optimal solution as we discussed in Section \ref{IIIA2}. Hence, we have 
\begin{align}
\label{EqualEnergy}
e_{u,i} &= \min\{\eta P_{ts,r,i,0}, P_{u,i,sat}\} \tau_0 \nonumber \\ 
&+ \min\{\eta P_{ts,r,i,1}, P_{u,i,sat} \} \tau_1 - P_{c,i} t_i ,~\forall i
\end{align}
Substituting \eqref{EqualEnergy} into $R_i$, we have 
\begin{align}{}
R_i = t_i \log_2(1+ \frac{a_i + b_i \tau_1}{t_i} -c_i),
\end{align} 
where $a_i = \min\{\eta P_{ts,r,i,0}, P_{u,i,sat}\} \tau_0 \gamma_{d,i}  / \sigma_h^2$, $b_i = \min\{\eta P_{ts,r,i,1}, P_{u,i,sat}\}  \gamma_{d,i} / \sigma_h^2$ and $c_i = P_{c,i} \gamma_{d,i} / \sigma_h^2$. Proposition \ref{ProOptiTime} holds here as well. Hence, \textbf{P3} is  modified as 
\begin{equation}\tag{$\textbf{P3.1}$} 
\begin{aligned}
\max_{\bar{\bm {t}}, {\tau}_1} ~&  \sum_{i=1}^N t_i \log_2(1+ \frac{a_i + b_i \tau_1}{t_i } -c_i), \\ 
\text{s.t.} \ \ & \text{C6}, ~\tau_1 \ge 0, \\
& \sum_{i=1}^N t_i \le 1- \tau_0^* - \tau_1.
\end{aligned}
\end{equation}

It can be verified that \textbf{P3.1} is a convex optimization problem \cite{BoydOne}, which can be solved by standard convex optimization techniques, e.g., Lagrange duality method. The Lagrangian of \textbf{P3.1} is given by  
\begin{align}
\mathcal{L} (\bar{\bm {t}}, \tau_1, \xi) &= \sum_{i=1}^N t_i \log_2(1+ \frac{a_i + b_i \tau_1}{t_i} -c_i) \nonumber \\
&- \xi \Big[\tau_0^* + \tau_1 + \sum_{i=1}^N t_i -1  \Big],
\end{align}
where  $\xi \ge 0$ is the Lagrange multiplier. 

\begin{proposition}
\label{ProOptiTimeAllocation}
With random design of phase shifts, the optimal time scheduling for the TS scheme is given by 
\begin{align}
\label{Opttau1}
&\tau_1^* = \frac{1-  \frac{K \mu }{K \mu + \min\{\eta P_h ||\bm{h}_r||^2,P_{irs,sat}\} } - \sum_{i=1}^N \frac{a_i}{z_i^* + c_i} }{1+ \sum_{i=1}^N \frac{b_i}{z_i^* + c_i } }, \\
\label{Optti}
& t_i^* = \frac{a_i + b_i\tau_1^*} { z_i^* + c_i }, ~\forall i,
\end{align}
where $z_i^* >0$ is the unique solution of 
$\log_2(1+z_i) - \frac{z_i+c_i}{\ln(2) (1+z_i) } = \xi^*$,
and $\xi^*$ is the optimal dual variable.
\end{proposition}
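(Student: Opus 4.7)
The plan is to attack \textbf{P3.1} via KKT conditions after first confirming that the problem is convex so that KKT is sufficient. For each summand $t_i \log_2\!\bigl(1 + (a_i + b_i\tau_1)/t_i - c_i\bigr)$, the inner argument is linear in $(t_i,\tau_1)$, and the function $(x,y) \mapsto y\log_2(x/y)$ is the perspective of a concave function and hence concave; composing with the appropriate linear map shows the objective is concave. The constraints are affine, so P3.1 is a convex program and the given Lagrangian captures the problem exactly (I would only drop the nonnegativity multipliers by arguing they are inactive at the optimum, since $t_i^* > 0$ and $\tau_1^* > 0$ are needed for any nontrivial sum-rate).

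Next I would argue that the budget constraint is tight: if $\sum_i t_i^* + \tau_1^* + \tau_0^* < 1$, one could enlarge any $t_i^*$ to strictly increase the objective, a contradiction. With this, I would compute $\partial \mathcal{L}/\partial t_i = 0$. A direct differentiation of $t_i \log_2(1 + (a_i + b_i\tau_1)/t_i - c_i)$ in $t_i$, using the substitution
\begin{equation}
z_i \triangleq \frac{a_i + b_i\tau_1}{t_i} - c_i,
\end{equation}
yields exactly
\begin{equation}
\log_2(1 + z_i) - \frac{z_i + c_i}{\ln(2)(1+z_i)} = \xi^*,
\end{equation}
which is the implicit equation defining $z_i^*$ in the proposition. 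Solving the defining relation for $t_i$ immediately gives $t_i^* = (a_i + b_i\tau_1^*)/(z_i^* + c_i)$, i.e. equation (19).

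To get $\tau_1^*$, I would substitute this expression for $t_i^*$ into the active budget $\sum_i t_i^* = 1 - \tau_0^* - \tau_1^*$, isolate $\tau_1^*$, and plug in $\tau_0^*$ from Proposition \ref{ProOptiTime}. This is algebra and yields (18). The only nontrivial step, and the one I expect to be the main obstacle, is justifying that the implicit equation for $z_i$ has a \emph{unique} positive root. For this I would study $f(z) \triangleq \log_2(1+z) - (z+c_i)/(\ln(2)(1+z))$ and show
\begin{equation}
f'(z) = \frac{z + c_i}{\ln(2)(1+z)^2},
\end{equation}
which is strictly positive on $z > 0$ because $c_i = P_{c,i}\gamma_{d,i}/\sigma_h^2 \ge 0$. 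Combined with $f(0) = -c_i/\ln(2) \le 0$ and $f(z) \to \infty$ as $z \to \infty$, strict monotonicity gives the unique $z_i^* > 0$ solving $f(z_i^*) = \xi^*$ for any admissible $\xi^*$. A brief remark would note that $\xi^*$ itself is pinned down by the $\partial \mathcal{L}/\partial \tau_1 = 0$ condition (or, equivalently, by a one-dimensional bisection over $\xi$ using the active budget), completing the characterization.
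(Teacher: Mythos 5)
Your route is essentially the paper's: form the Lagrangian of the convex problem \textbf{P3.1}, impose the KKT stationarity condition in $t_i$, substitute $z_i=(a_i+b_i\tau_1)/t_i-c_i$ to obtain $\log_2(1+z_i)-\frac{z_i+c_i}{\ln(2)(1+z_i)}=\xi^*$, invoke strict monotonicity for uniqueness of $z_i^*$, and combine the tight total-time constraint with Proposition \ref{ProOptiTime} to recover \eqref{Opttau1} and \eqref{Optti}. Your explicit concavity check (perspective-function argument) and the computation $f'(z)=\frac{z+c_i}{\ln(2)(1+z)^2}$ make rigorous two points the paper merely asserts, which is a welcome addition rather than a deviation.

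The one step that does not hold as written is your tightness argument: ``if the budget were slack, one could enlarge any $t_i^*$ to strictly increase the objective.'' With circuit power $c_i>0$ the per-user term $t_i\log_2\bigl(1+(a_i+b_i\tau_1)/t_i-c_i\bigr)$ is \emph{not} monotone in $t_i$; its derivative in $t_i$ is exactly $f(z_i)$, which becomes negative for large $t_i$ (small $z_i$, since $f(0)=-c_i/\ln 2<0$), so enlarging $t_i$ can strictly decrease the rate. The repair is immediate and is what the paper does implicitly: the objective is strictly increasing in $\tau_1$, with partial derivative $\sum_{i=1}^N \frac{b_i}{\ln(2)(1+z_i)}>0$, so slack would be absorbed by increasing $\tau_1$; equivalently, the stationarity condition in $\tau_1$ forces $\xi^*=\sum_{i=1}^N \frac{b_i}{\ln(2)(1+z_i^*)}>0$, and complementary slackness then gives $\tau_0^*+\tau_1^*+\sum_{i=1}^N t_i^*=1$. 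With that substitution your derivation of \eqref{Optti} and then \eqref{Opttau1} goes through unchanged, and your closing remark on pinning down $\xi^*$ (bounded by $\frac{1}{\ln 2}\sum_{i=1}^N b_i$ and found by bisection) matches the paper.
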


\begin{proof}
Refer to Appendix \ref{App:ProOptiTimeAllocation}.
\end{proof}

Using \eqref{EqualEnergy} and Proposition  \ref{ProOptiTimeAllocation}, the optimal energy allocation at each user can be easily obtained. 
\section{Sum-rate maximization for the PS scheme}
\label{PSMax}
In this section, we investigate the optimal solution to the sum-rate maximization problem for the PS scheme. The problem is formulated as 
\begin{equation}\tag{$\textbf{P4}$} 
\begin{aligned}
\max_{ \bm{t},  \bm{\Theta}_{e}, \{\bm{\Theta}_{d,i}\}_{i=1}^{N}, \bm{P}_u, {\beta}_e } ~  &\sum_{i=1}^N R_{i}, \\ 
\text{s.t.} \ \ &\text{C4},~\text{C6}-\text{C9},\\ 
& \text{C16:}~ K \mu (t_0 +  \sum_{i=1}^N t_i ) \le E_{ps,irs},~ \forall i, \\
& \text{C17:}~ P_{u,i} t_i + P_{c,i} t_i \le E_{ps,u,i}, ~\forall i, \\
& \text{C18:}~ 0 \le \beta_{e} \le 1.
\end{aligned}
\end{equation}

\subsection{Near-optimal solution to \textbf{P4}}
Similar to \textbf{P1}, \textbf{P4} is a non-convex optimization problem due to the coupled variables in the objective function and the constraints. It is straightforward to observe  that Proposition \ref{LemmaOne} also holds for \textbf{P4}. Accordingly, \textbf{P4} can be  equivalently reformulated as
\begin{equation}\tag{$\textbf{P4.1}$} 
\begin{aligned}
 \max_{ \bm{t},  \bm{\Theta}_{e},  \bm{P}_u, {\beta}_e} ~ &\sum_{i=1}^N t_i \log_2(1 +  \frac{P_{u,i} \bar{\gamma}_i} {\sigma_h^2}), \\ 
\text{s.t.} ~~&\text{C4}, ~\text{C6}-\text{C8}, ~\text{C16}-\text{C18}.
\end{aligned}
\end{equation}

\begin{lemma}
\label{LemmaFeasible}
If the PS scheme is employed at the IRS, the following condition must be met in order to guarantee that IRS can be used for assisting in downlink ET and uplink IT:
\begin{align}
\label{FeasibleCondition}
K \mu < \min\{\eta P_h ||\bm{h}_r ||^2, P_{irs,sat}\}.
\end{align}
\end{lemma}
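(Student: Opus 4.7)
My plan is to show the condition is necessary and sufficient for the energy constraint at the IRS, with a nontrivial role for the IRS, to be feasible. The IRS's energy neutrality constraint for the PS scheme, as defined in the system model, is
\begin{align}
K\mu\Bigl(t_0+\sum_{i=1}^N t_i\Bigr)\le E_{ps,irs}
=\min\bigl\{\eta P_h(1-\beta_e^2)\lVert\bm{h}_r\rVert^2,\,P_{irs,sat}\bigr\}\,t_0,
\end{align}
together with $t_0>0$, $\beta_e\in[0,1]$, and $t_i\ge 0$. ``Using the IRS to assist in downlink ET'' requires the reflected branch to carry power, i.e.\ $\beta_e<1$ (equivalently $1-\beta_e^2>0$), and ``using the IRS to assist in uplink IT'' requires at least one user to transmit, i.e.\ $\sum_{i=1}^N t_i>0$.

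For necessity, I would divide the constraint through by $t_0$ and observe that
\begin{align}
K\mu\,\frac{t_0+\sum_{i} t_i}{t_0}>K\mu,
\end{align}
because $\sum_i t_i>0$. At the same time, because $\beta_e\ge 0$, the reflected-power expression satisfies $\eta P_h(1-\beta_e^2)\lVert\bm h_r\rVert^2\le \eta P_h\lVert\bm h_r\rVert^2$, so
\begin{align}
\min\bigl\{\eta P_h(1-\beta_e^2)\lVert\bm h_r\rVert^2,P_{irs,sat}\bigr\}
\le\min\bigl\{\eta P_h\lVert\bm h_r\rVert^2,P_{irs,sat}\bigr\}.
\end{align}
Chaining these two inequalities through the energy constraint gives the strict bound $K\mu<\min\{\eta P_h\lVert\bm h_r\rVert^2,P_{irs,sat}\}$.

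For sufficiency I would exhibit an explicit feasible triple $(t_0,\{t_i\},\beta_e)$. Set $\beta_e^2=\varepsilon_1$ small enough that $\eta P_h(1-\varepsilon_1)\lVert\bm h_r\rVert^2$ still exceeds $K\mu$ (possible because $\eta P_h\lVert\bm h_r\rVert^2>K\mu$ by hypothesis); the min on the right then still strictly exceeds $K\mu$, since $P_{irs,sat}>K\mu$ as well. Now pick $\sum_i t_i=\varepsilon_2 t_0$ with $\varepsilon_2>0$ small enough that $K\mu(1+\varepsilon_2)$ remains below this strict bound. This gives a strictly feasible configuration where the IRS both reflects ($\beta_e>0$) and serves at least one uplink slot ($\sum t_i>0$), completing the argument.

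I do not anticipate any real obstacle here: the whole lemma reduces to comparing the two sides of a single linear inequality once one fixes the roles of $\beta_e$ and $\sum t_i$. The only point requiring a little care is the handling of the minimum on the right-hand side, since it is important that the strict inequality be preserved against both arguments of the $\min$ simultaneously; this is why the necessary-and-sufficient condition couples $\eta P_h\lVert\bm h_r\rVert^2$ and $P_{irs,sat}$ through $\min\{\cdot,\cdot\}$ rather than just one of them.
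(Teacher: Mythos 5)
Your necessity argument is correct, and it is in fact a bit more elementary than the paper's. Both proofs work from the IRS energy-causality constraint C16, but you get the strict inequality from $\sum_i t_i>0$ after dividing by $t_0$, whereas the paper gets it from $t_0^*<1$; and you dispose of the $\min$ on the right-hand side with the trivial bound $1-\beta_e^2\le 1$, whereas the paper invokes an optimality argument (at the optimum the harvester must not be driven past saturation, so the $\min$ equals $\eta P_h(1-\beta_e^{*2})\lVert\bm{h}_r\rVert^2$) — an argument it needs anyway because the same appendix simultaneously derives the closed-form $\beta_e^*$ of Proposition \ref{Relationship}, which your proof does not attempt. Your sufficiency construction goes beyond what the lemma claims (it only asserts necessity) but is sound, provided you also note that $t_0$ can be chosen so that $t_0(1+\varepsilon_2)\le 1$ to respect the total-time constraint. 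One small slip in your setup: the roles of the two endpoints of $\beta_e$ are swapped — the reflected branch carries power proportional to $\beta_e^2$, so assisting in downlink ET requires $\beta_e>0$, while $\beta_e<1$ is what lets the IRS harvest; this does not damage either direction, since your necessity step only uses $\beta_e\in[0,1]$ and your sufficiency picks a small $\beta_e>0$, which indeed both reflects and harvests.
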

\begin {proof}
Refer to Appendix \ref{App:LemmaFeasible}.
\end{proof}

\begin{remark}
From Lemma \ref{LemmaFeasible}, we can observe that  the PS scheme cannot always be used, i.e., if \eqref{FeasibleCondition} is not satisfied. That is to say, the applications of the PS scheme are restricted by the IRS's setting (i.e., the number of IRS elements, the circuit power consumption, and the saturation power) and the network environment (i.e., the transmit power at the HAP and  the channel power gain between the HAP and IRS).
If $P_{irs, sat} > \eta P_h \|\bm{h}_r \|^2$,  we can increase the transmit power at the HAP and/or reduce the distance between the HAP and IRS to enable the PS scheme. However, if $P_{irs, sat} \le \eta P_h \|\bm{h}_r \|^2$, the maximum number of IRS elements for enabling the PS scheme is $\lfloor \frac{P_{irs, sat}}{\mu} \rfloor$. Compared to the PS scheme, the TS scheme is free from the limitation and can be applied  more widely.
\end{remark}

In the following, we investigate \textbf{P4.1} under the condition that \eqref{FeasibleCondition} is satisfied, because otherwise the IRS is not able to improve the performance of WPCN. Following the same steps as in Section \ref{IIIA2}, the sum-rate maximization problem is formulated as  

\begin{equation}\tag{$\textbf{P4.2}$} 
\begin{aligned}
 \max_{\bm{t},  \bm{e}_u, \beta_e, \bm{V}_e} ~ &\sum_{i=1}^N t_i \log_2(1 +  \frac{\bar{\gamma}_i} {\sigma_h^2}\frac{e_{u,i}}{t_{i}}), \\ 
\text{s.t.} ~~& \text{C4},~\text{C6},~\text{C11}-\text{C14},~\text{C16},~\text{C18},\\
&\text{C19:}~e_{u,i}+ P_{c,i} t_i \nonumber \\
&~~~\le \min \Big\{\eta P_h [\text{Tr}(\bm{\bar{R}}_{e,i} \bm{V}_{e}) + |h_{h,i}|^2 ],P_{u,i,sat}\Big\} t_0, \\
\end{aligned}
\end{equation} 
where 
$${\bm{\bar{R}}}_{e,i} = 
\begin{bmatrix}
 \beta_e^2\bm{\psi}_{i} \bm{\psi}_i^H & \beta_e\bm{\psi}_i h_{h,i}^H \\ 
 \beta_e\bm{\psi}_i^H h_{h,i} & 0 
\end{bmatrix}.
$$
Similarly, solving \textbf{P4.2} is equivalent to solving \textbf{P4}. From \textbf{P4.2}, we first obtain the following proposition about the optimal amplitude reflection coefficient.

\begin{proposition}
\label{Relationship}
The optimal value of the amplitude reflection coefficient $\beta_e$ is obtained as 
\begin{align}
\label{Opt_Beta}
\beta_e^* = \sqrt{1-\dfrac{K\mu}{\eta P_h||\bm{h}_{r}||^2 t_0^*}},
\end{align}
where $\max \{\frac{K \mu}{\eta P_h ||\bm{h}_r ||^2 }, \frac{K\mu}{P_{irs,sat}}\} < t_0^* < 1$.
\end{proposition}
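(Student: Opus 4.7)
The plan is to pin down $\beta_e^*$ by showing that three facts hold at the optimum of \textbf{P4.2}: (i) the IRS energy-balance constraint C16 is tight; (ii) the time-budget constraint C4 is tight; and (iii) the IRS's piece-wise EH model operates on its linear (non-saturated) branch. Once these three facts are established, the claimed formula follows from a one-line algebraic rearrangement, and the feasibility range for $t_0^*$ comes from enforcing $\beta_e^*\in[0,1]$ together with (iii).

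First I would argue that C16 must bind. The scalar $\beta_e$ enters \textbf{P4.2} only in two places: inside $\bm{\bar R}_{e,i}$ on the right-hand side of C19, and inside $E_{ps,irs}$ on the right-hand side of C16. A direct expansion gives $\eta P_h[\text{Tr}(\bm{\bar R}_{e,i}\bm V_e)+|h_{h,i}|^2]=\eta P_h|\beta_e\bm{\psi}_i^H\bm v_e+h_{h,i}|^2$, which is non-decreasing in $\beta_e\ge 0$ once $\bm v_e$ is chosen by the SDR so that the reflected-plus-direct combination adds constructively. In contrast, $E_{ps,irs}$ is non-increasing in $\beta_e$. Consequently, increasing $\beta_e$ at a putative optimum weakly relaxes every C19 (hence weakly enlarges the feasible $e_{u,i}$ and the objective) while only tightening C16. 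If C16 had slack, $\beta_e$ could be nudged upward, contradicting optimality.

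Next I would show that C4 is tight by an exchange argument: any unused time can be reallocated either to $t_0$ (which strictly enlarges the RHS of both C17 and C19 and simultaneously relaxes C16, enabling a larger $\beta_e$) or to some $t_i$ (which strictly increases $R_i$ via the $t_i\log_2(\cdot)$ factor). With both C16 and C4 tight, the IRS's energy balance reads
\begin{equation*}
K\mu \;=\; \min\bigl\{\eta P_h(1-\beta_e^{*2})\|\bm h_r\|^2,\,P_{irs,sat}\bigr\}\,t_0^*.
\end{equation*}
On the saturated branch the right-hand side is independent of $\beta_e$, so within that branch $\beta_e$ could be increased further without affecting C16; hence the optimum must lie strictly on the linear branch, giving $K\mu = \eta P_h(1-\beta_e^{*2})\|\bm h_r\|^2\,t_0^*$. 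Solving for $\beta_e^*$ yields \eqref{Opt_Beta} immediately. The stated range of $t_0^*$ follows from three feasibility requirements: $\beta_e^{*2}\ge 0$ forces $t_0^*\ge K\mu/(\eta P_h\|\bm h_r\|^2)$; being on the linear rather than saturated branch, $\eta P_h(1-\beta_e^{*2})\|\bm h_r\|^2<P_{irs,sat}$, forces $t_0^*>K\mu/P_{irs,sat}$; and leaving strictly positive time for IT gives $t_0^*<1$.

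The main obstacle I anticipate is rigorously certifying the $\beta_e$-monotonicity of the C19 right-hand side in the multi-user setting, where a single vector $\bm v_e$ has to serve all $N$ users simultaneously and per-user phase alignment is in general impossible. I would handle this via a perturbation argument on the SDR-optimal $\bm V_e$: at the optimum of \textbf{P4.2}, $\bm V_e$ has already internalized all multi-user phase trade-offs, and an infinitesimal increase in $\beta_e$ only rescales (rather than re-phases) the reflected coefficient $\beta_e\bm{\psi}_i^H\bm v_e$, so by KKT stationarity with respect to $\beta_e$ the aggregate effect on the binding C19 constraints is non-negative. Once this monotonicity is in hand, the rest of the proof reduces to the algebraic manipulations above.
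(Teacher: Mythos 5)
Your argument matches the paper's own proof in all essentials: both make the IRS energy-causality constraint bind in $\beta_e$ (on the premise that more reflected power benefits the users), exclude the saturated EH branch because there $\beta_e$ could be raised without affecting the IRS's harvested energy, and then solve the resulting balance $K\mu = \eta P_h(1-\beta_e^{2})\|\bm{h}_r\|^2 t_0^*$ for $\beta_e^*$, with the range of $t_0^*$ coming from $\beta_e\in[0,1]$, the non-saturation condition, and $t_0^*<1$. The multi-user monotonicity caveat you flag is simply asserted in the paper ("the amplitude reflection coefficient must be set to its upper-bound to maximize the amount of reflected power"), so your treatment is, if anything, slightly more explicit on that point and on the tightness of C4.
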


\begin {proof}
Refer to Appendix \ref{App:LemmaFeasible}.
\end{proof}



For solving \textbf{P4.2}, we first fix $t_0$ and optimize other variables. The optimal value of $t_{0}$ can then be obtained by a one-dimensional search over $\big( \max \{K \mu / (\eta P_h ||\bm{h}_r ||^2), K \mu/P_{irs,sat}\}, 1 \big)$.  
Given $t_0$, the optimal value of $\beta_e$ can be found from Proposition \ref{Relationship} and we will have the following optimization problem:
\begin{equation}\tag{$\textbf{P4.3}$} 
\begin{aligned}
 \max_{\bar{\bm{t}},  \bm{e}_u, \bm{V}_e} ~ &\sum_{i=1}^N t_i \log_2(1 +  \frac{\bar{\gamma}_i} {\sigma_h^2}\frac{e_{u,i}}{t_{i}}), \\ 
\text{s.t.} ~~& \text{C4},~\text{C6},~\text{C11}-\text{C14},~\text{C19}.
\end{aligned}
\end{equation}

After the relaxation of the rank-one constraint in C14,  \textbf{P4.3} is similar to \textbf{P2.2} in Section \ref{IIIA2} and can be solved following the same procedure. For brevity and to avoid repetition, we do not explain the details of solving \textbf{P4.3} here.

Algorithm \ref{Alg:Three} describes the process of solving the sum-rate maximization problem for the PS scheme, which is implemented at the HAP. Similar to what has been mentioned for Algorithm \ref{Alg:Two}, the computational complexity of Algorithm \ref{Alg:Three} is $\mathcal{O}(\bar{M} \max (K,N)^4 K^{0.5} \log(1/\epsilon) + \bar{M}DN )$, where  $\bar{M}$ is the number of iterations for updating $t_0$. Again, by setting the appropriate step size for updating $t_0$ and relatively large number of randomizations for the Gaussian randomization method, we can obtain the near-optimal solution to \textbf{P4}.

\begin{algorithm}
\caption{ The Algorithm for Solving \textbf{P4}.}
\label{Alg:Three}
\begin{algorithmic}[1] 
\STATE{Initialize $t_0 = \max \{\frac{K \mu}{\eta P_h ||\bm{h}_r ||^2 }, \frac{K\mu}{P_{irs,sat}}\}$, and step size $\Delta$.}
\STATE{Find the optimal phase shifts for the IT phase from Proposition \ref{LemmaOne}. }
\WHILE{$t_0 < 1$}
\STATE Obtain $\beta_e^* (t_0)$ from \eqref{Opt_Beta}. 
      
     \STATE{Solve \textbf{P4.3} to obtain $\bar{\bm{t}}^* (t_0)$ and $\bar{\bm{V}}_e (t_0)$}.
     \STATE{Perform Gaussian randomization and obtain $R^* (t_0)$.} 
 \STATE{$t_0=t_0+\Delta$.}
 \ENDWHILE.
 \STATE {Set $t_0^*=\arg \max_{t_0 } R^*$, $\beta_e^*=\beta_e^*(t_0^*)$, $\bm{v}_e^*=\bm{v}_e^* (t_0^*)$, $\bm{e}_u^* =\bm{e}_u^* (t_0^*)$, $\bar{\bm{t}}^*=\bar{\bm{t}}^* (t_0^*) $. }
 \STATE {Set $P_{u,i}^*=e_{u,i}^*/t_i^*,~\forall i$ and extract $\theta_{e,k}^*,~\forall k$ from $\bm{v}_e^*$}.
 \end{algorithmic}
\end{algorithm}

\subsection{Random phase shifts with optimized resource allocation for the PS scheme}
Similar to Section \ref{RndPhase}, we consider the random design of phase shifts for the PS scheme and optimize the resource allocation in the network. With randomly generated phase shifts and after setting $e_{u,i}=P_{u,i}t_i,~\forall i$, we have the following resource allocation problem: 

\begin{equation}\tag{$\textbf{P5}$} 
\begin{aligned}
 \max_{\bm{t},  \bm{e}_u, \beta_e} ~ &\sum_{i=1}^N t_i \log_2(1 +  \frac{{\gamma}_{d,i} } {\sigma_h^2}\frac{e_{u,i}}{t_{i}}), \\ 
\text{s.t.} ~~& \text{C4},~\text{C6},~\text{C11},~\text{C16},~\text{C18}, \\ 
&\text{C20:}~~e_{u,i}+ P_{c,i} t_i \le P_{ps,u,i}  t_0, ~\forall i,
\end{aligned}
\end{equation} 
where 
$P_{ps,u,i} = \min \{\eta P_{h}  | \bm{h}_{u,i}^H \sqrt{\rho} \beta_{e} \bar{\bm{\Theta}}_{e} \bm{h}_r + h_{h,i}  |^2, P_{u,i,sat}\} $.
It can be observed that Proposition \ref{Relationship} also holds for \textbf{P5}. Due to the non-convexity of C20, it is still challenging to solve \textbf{P5}. Hence, we first fix $t_0$ and optimize the time and energy allocation in the IT phase. We then find the optimal value of $t_0$ by searching over $\big( \max \{K \mu / (\eta P_h ||\bm{h}_r ||^2), K \mu/P_{irs,sat}\}, 1 \big)$.

We know from previous discussions that C20 must be met with equality at the optimal solution, i.e., 
\begin{align}
\label{EqualEnergy2}
e_{u,i}+ P_{c,i} t_i = P_{ps,u,i}  t_0.
\end{align}
Consequently, given $t_0$ and $\beta_e$, \textbf{P6} is rewritten as 
\begin{equation}\tag{$\textbf{P6.1}$} 
\begin{aligned}
 \max_{\bar{\bm{t}}} ~ &\sum_{i=1}^N t_i \log_2(1 + d_i \frac{t_0}{t_i} -c_i ), \\ 
\text{s.t.} ~~& \text{C4},~\text{C6}, 
\end{aligned}
\end{equation} 
where $d_i =  P_{ps,u,i} \gamma_{d,i}  / \sigma_h^2$.
The Lagrangian of the above convex problem is given by 
$\mathcal{L}(\bm{\bar{t}}, \zeta)= \sum_{i=1}^N t_i \log_2(1 +  d_i\frac{t_0}{t_i} -c_i )-\zeta \big(t_0+\sum_{i=1}^{N}t_i -1)$,  
where $\zeta$ is  the Lagrange multiplier. 

\begin{proposition}
\label{ProFive}
With fixed $t_0$ and $\beta_e$, the optimal time allocation in the IT phase for the PS scheme is given by 
\begin{align}
t_i^*&=\frac{d_i t_0}{w_i^*+c_i},~\forall i,
\end{align}
where $w_i^* >0$ is the unique solution of 
$\log_2(1+ w_i)-\frac{w_i+c_i}{\ln(2) (1+w_i)}=\zeta^*$,
and $\zeta^*$ is the optimal dual variable.
\end{proposition}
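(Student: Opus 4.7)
The plan is to mimic the proof of Proposition \ref{ProOptiTimeAllocation} almost verbatim, since \textbf{P6.1} has the same structural form as \textbf{P3.1} (the sum of terms $t_i\log_2(1 + \alpha_i/t_i - c_i)$ subject to a linear time-budget constraint), and only the coefficient in front of the inverse term differs ($d_i t_0$ here versus $a_i + b_i\tau_1$ there). First I would verify that \textbf{P6.1} is a convex optimization problem: each $t_i\log_2(1+d_i t_0/t_i - c_i)$ is a perspective-type function and concave in $t_i$ (it is the perspective of $\log_2(1+d_i t_0 - c_i u)$ or can be checked directly via the second derivative), the feasible set is a simplex in $\bar{\bm t}$, so Slater's condition holds and the KKT conditions are necessary and sufficient for optimality.

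Next I would form the stationarity condition $\partial \mathcal{L}/\partial t_i = 0$ using the Lagrangian stated just before the proposition. A direct computation yields
\begin{equation*}
\frac{\partial \mathcal{L}}{\partial t_i} = \log_2\!\Bigl(1 + d_i\tfrac{t_0}{t_i} - c_i\Bigr) - \frac{d_i t_0/t_i}{\ln(2)\bigl(1 + d_i t_0/t_i - c_i\bigr)} - \zeta = 0.
\end{equation*}
With the change of variables $w_i \triangleq d_i t_0/t_i - c_i$, so that $d_i t_0/t_i = w_i + c_i$ and $1 + d_i t_0/t_i - c_i = 1 + w_i$, this collapses precisely to
\begin{equation*}
\log_2(1+w_i) - \frac{w_i + c_i}{\ln(2)(1+w_i)} = \zeta,
\end{equation*}
which is the implicit equation in the proposition. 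Inverting the substitution gives $t_i^* = d_i t_0/(w_i^* + c_i)$ as claimed.

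The only non-routine step is establishing that $w_i^*$ is the \emph{unique} positive solution of the implicit equation, so that $t_i^*$ is well-defined. For this I would show that the map $g(w) = \log_2(1+w) - (w+c_i)/(\ln(2)(1+w))$ is strictly increasing on $w>0$: differentiating gives $g'(w) = 1/(\ln(2)(1+w)) - [\ln(2)(1+w) - \ln(2)(w+c_i)]/[\ln(2)(1+w)]^2 = (w + c_i)/[\ln(2)(1+w)^2]$, which is strictly positive as long as $c_i > -w$, and in particular for all $w>0$ since $c_i = P_{c,i}\gamma_{d,i}/\sigma_h^2 \ge 0$. Combined with $g(0) = -c_i/\ln(2) \le 0 \le \zeta^*$ (where $\zeta^* > 0$ follows from the fact that the total time constraint must bind at the optimum, since the objective is strictly increasing in each $t_i$ on a neighborhood of the optimum) and $g(w)\to\infty$ as $w\to\infty$, uniqueness and existence of $w_i^*$ follow.

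Finally, the optimal $\zeta^*$ is pinned down by complementary slackness: since the time-budget constraint $t_0 + \sum_i t_i \le 1$ is active, substituting $t_i^* = d_i t_0/(w_i^*(\zeta) + c_i)$ into $\sum_i t_i^* = 1 - t_0$ yields a one-dimensional equation in $\zeta$ whose monotone structure (inherited from that of $g$) guarantees a unique $\zeta^*$, which can be recovered numerically by bisection. I expect the monotonicity/uniqueness argument for $g$ to be the only mildly technical step; everything else is a direct KKT calculation parallel to Proposition \ref{ProOptiTimeAllocation}.
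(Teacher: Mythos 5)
Your argument is correct and matches the paper's intended proof: the paper gives no separate proof for this proposition, stating only that it parallels Proposition \ref{ProOptiTimeAllocation} (Appendix \ref{App:ProOptiTimeAllocation}), and your KKT stationarity computation, the substitution $w_i = d_i t_0/t_i - c_i$, and the monotonicity of $g(w)=\log_2(1+w)-\tfrac{w+c_i}{\ln(2)(1+w)}$ reproduce exactly that argument. The only minor imprecision is the side claim that the time-budget constraint must always bind (the per-user rate $t_i\log_2(1+d_i t_0/t_i-c_i)$ has an interior maximizer when $c_i>0$, so $\zeta^*=0$ is possible for a loose budget), but since the stated formula is in terms of the optimal dual $\zeta^*$ and your existence/uniqueness argument only needs $\zeta^*\ge 0 > -c_i/\ln 2$, the conclusion is unaffected.
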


The proof of Proposition \ref{ProFive} is similar to that of Proposition \ref{ProOptiTimeAllocation} and is thus omitted for brevity. Updating $t_0$ by the one-dimensional search method, we can obtain its optimal solution. After that,
the optimal energy allocation can be easily found via \eqref{EqualEnergy2} and Proposition \ref{ProFive}.

\begin{figure}[t]
\centering
\includegraphics[width=3.3 in] {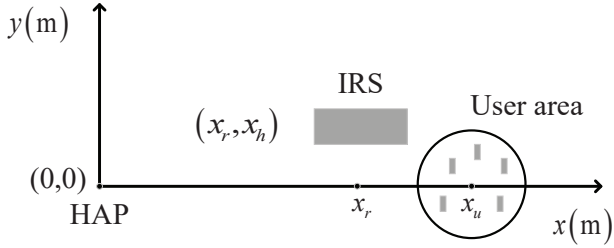}
\caption{Simulation setup for the IRS-assisted WPCN.}
\label{LocationIllustration}
\end{figure}

\section{Performance Evaluation}
\label{Simulation}
In this section, we present numerical results to evaluate the performance of the proposed solutions for the IRS-assisted WPCN. The simulated network topology is a 2-D coordinate system as shown in Fig. \ref{LocationIllustration}, where the coordinates of the HAP and the IRS are given as (0,0) and $(x_r,x_h)$, the users are randomly deployed within a circular area centered at $(x_u, 0)$ with radius 1 m. We follow the channel model considered in \cite{WuSWIPT}. In particular, the large-scale path-loss is modeled as $A (d/{d_0})^{-\alpha}$, where $A$ is the path-loss at the reference distance $d_0=1$ m and set at $A=-10$ dB \cite{Chu}, $d$ denotes the distance between two nodes, and $\alpha$ is the path-loss exponent.  The path-loss exponents of the links between the HAP and users are assumed to be $3.6$ since the users are randomly deployed, while the path-loss exponents of the links between the HAP and IRS and between the IRS and users are set at $2.2$ because the IRS can be carefully deployed to avoid the severe signal blockage. The small-scale fading coefficients are modeled to be Rician fading. In particular, the small-scale channel from the HAP to the IRS can be expressed as 
$\bar{\bm{h}}_r = \sqrt{\frac{\beta_{hap,irs}}{\beta_{hap,irs} +1}} \bar{\bm{h}}_r^{\text{LoS}}  + \sqrt{\frac{1}{\beta_{hap,irs} +1}} \bar{\bm{h}}_r^{\text{NLoS}}$,
where $\beta_{hap,irs}$ is the Rician factor for the HAP-IRS link, $\bar{\bm{h}}_r^{\text{LoS}}$ is the deterministic line of sight (LoS) component, and $\bar{\bm{h}}_r^{\text{NLoS}}$ is the Rayleigh fading component  with circularly symmetric complex Gaussian random variables with zero mean and unit variance. The small-scale channels for the other links are similarly defined. 
 The Rician factors for the HAP-IRS link, the HAP-$U_i$ link, and the IRS-$U_i$ link are set at $\beta_{hap,irs} = 3$, $\beta_{hap,U_i} =0$, and $\beta_{irs,U_i} =3$, respectively.
Unless otherwise stated, other parameters are given as follows: $\rho = 0.8$, $\eta = 0.8$, $\sigma_h^2 = -110$ dBm, $P_{u,sat} = 5$ mW,  $P_{irs,sat} = 0.8$ W,  $\mu= 10$ mW \cite{Huang2019IRS}, $P_{c,i} = 20$ mW, $N=10$, $K=20$,  $P=40$ dBm,  $x_r = 3$ m, $x_h = 0.5$ m, and $ x_u = 6$ m. The scheme with random design of phase shifts, the scheme with random EH time, and the scheme without IRS are used as benchmarks for performance comparisons.

Fig. \ref{HAPPower} shows the influence of the HAP's transmit power on the average system sum-rate. As expected, the average sum-rate is improved with the increase of the HAP's transmit power because the users can harvest more energy when the HAP's transmit power is higher. Further, according to Proposition \ref{ProOptiTime}, the time needed for the IRS's EH in the TS scheme can be reduced when the transmit power of the HAP is increased. This gives more time for the IRS to assist in downlink ET from the HAP to the users, which boosts the harvested energy at the user and consequently improves the sum-rate. As for the PS scheme, increasing the HAP's transmit power results in higher amplitude reflection coefficient according to Proposition \ref{Relationship}, which enhances the users' harvested energy. It can be seen that our proposed schemes with optimized phase shift design outperform the benchmark ones for both the TS and PS schemes. The figure also shows that when $P \le 30$ dBm, there is no gain in using the IRS for improving the performance of WPCN for the PS scheme, which is consistent with what has been noted in Lemma \ref{LemmaFeasible}. When $P \ge 40$ dBm, the average sum-rate achieved by the  PS scheme becomes stable. It is because the maximum harvested power by the IRS and users is limited by the values of their saturation power.

\begin{figure}[t]
\centering
\includegraphics[width=3.5 in] {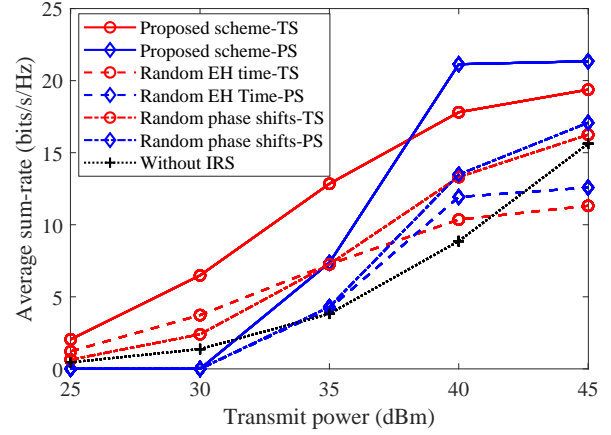}
\caption{Sum-rate versus the HAP's transmit power. }
\label{HAPPower}
\end{figure} 

\begin{figure}[t]
\centering
\includegraphics[width=3.5 in] {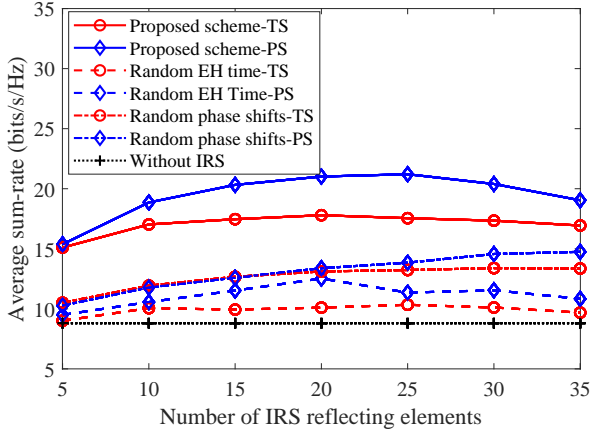}
\caption{Sum-rate versus the number of IRS reflecting elements. }
\label{IRSEelements}
\end{figure}

In Fig. \ref{IRSEelements}, we study the impact of the number of IRS reflecting elements on the average sum-rate. It can be clearly observed that our proposed schemes can achieve a significant gain in terms of the average sum-rate compared with other schemes. As the number of IRS elements increases, the sum-rate achieved by our proposed schemes first increases and then reduces. It is because increasing the number of elements can provide additional transmission links for the ET and IT but also increases the circuit power consumption of the IRS, which thus reduces the users' IT time . If the improved channel power gains can compensate for the reduction of IT time, the  sum-rate can be improved; otherwise, the  sum-rate reduces. This observation indicates that setting an appropriate number of IRS elements is important for performance enhancement.
 For the scheme without IRS, the sum-rate is smallest. It is because  the received power at each user from the HAP through the direct link only is limited, thus more time is required to harvest energy to power its circuit, and the remaining energy and time for the IT is relatively small. It is also worth mentioning that even the  schemes with random phase shifts and the schemes with random EH time can bring performance gains to the  WPCN. That is because the RF energy can still be transferred from the HAP to the users through the reflecting links \cite{Arun}. It endorses the effectiveness of using the IRS for performance enhancement.

\begin{figure}[t]
\centering
\includegraphics[width=3.5 in] {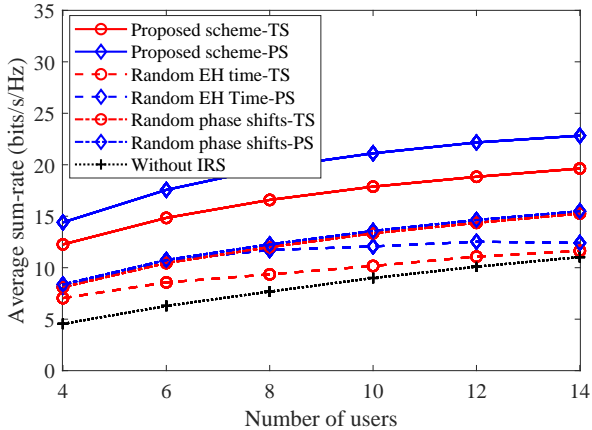}
\caption{Sum-rate versus the number of users. }
\label{NumUsers}
\end{figure} 

\begin{figure}[t]
\centering
\includegraphics[width=3.5 in] {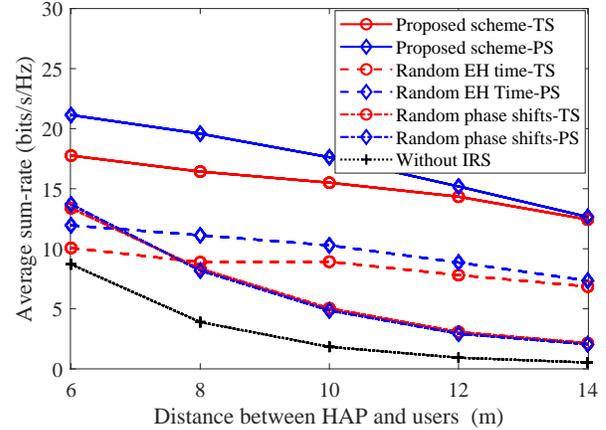}
\caption{Sum-rate versus the distance between HAP and users. }
\label{DistanceUser}
\end{figure}

In Fig. \ref{NumUsers}, we study the effect of the number of network users on the average sum-rate. Again, the proposed IRS-assisted WPCN with optimal phase shift design notably outperforms the other schemes. It can be observed that the average sum-rate is increasing with the number of users because more energy can be harvested with the increase of  the number of users.
 Nevertheless, the average sum-rate does not increase when the number of users reaches a high number, e.g., over 10 users. The reason for this observation is that adding new users implies that more time is needed for the IT phase, which in consequence decreases the ET phase duration. Shorter ET duration in the TS scheme means that less time will be left for the IRS to assist in the downlink ET. In the PS scheme, the IRS needs to decrease its  amplitude reflection coefficient $\beta_e$ to compensate for the loss of energy incurred by shortening the ET duration. Therefore, the gain brought by incrementing the number of users is neutralized by shortened ET time and the average sum-rate converges to an upper bound.

\begin{figure}[t]
\centering
\includegraphics[width=3.5 in] {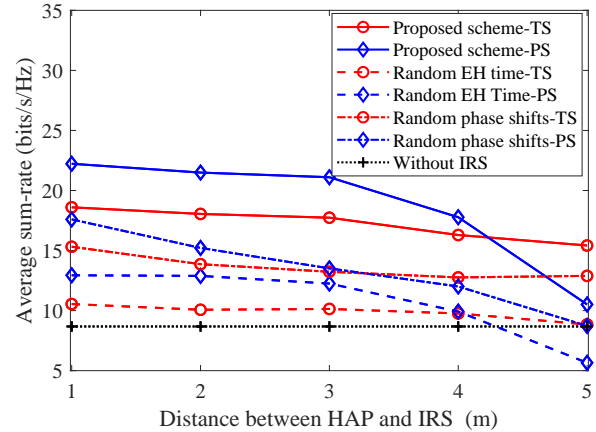}
\caption{Sum-rate versus the distance between HAP and IRS. }
\label{DistanceIRS}
\end{figure}

Next, we investigate the effect of users' locations on the sum-rate performance. As shown in Fig. \ref{DistanceUser}, increasing $x_u$ results in sum-rate reduction because as $x_u$ increases, the users move further from both the HAP and IRS. Therefore, the signals received by the users in the ET phase from both the HAP and the IRS become weaker. Similarly, the signals received by the HAP in the uplink IT also get weaker. Once again, the proposed schemes can significantly outperform the other benchmark schemes.

Finally, we investigate the impact of the IRS's location on the sum-rate performance in Fig. \ref{DistanceIRS}. It can be observed that increasing the distance between the HAP and IRS reduces the  sum-rate for the proposed schemes. It is because as the distance increases, the IRS has to spend more time to harvest sufficient energy to power its operations, which thus reduces the IT time for users. Compared to the TS scheme,  the PS scheme is more susceptible to the IRS's location. The reason is that for the PS scheme, as the distance increases, not only the users' IT time but also the amplitude reflection coefficient at the IRS will be reduced.

\section{Conclusions}
\label{Conclusion}
This paper has proposed the hybrid-relaying scheme empowered by a self-sustainable IRS to enhance the performance of WPCN, where the IRS is deployed to improve the efficiency of downlink ET from the HAP to a number of users and uplink IT from the users to the HAP. In addition, we have proposed the TS and PS schemes for the IRS to harvest sufficient energy from the HAP to power its operations and investigated system sum-rate maximization problems for both schemes. To address the non-convexity of each formulated problem, we have developed the two-step algorithms to efficiently obtain the near-optimal solution with satisfying accuracy. The special problems with random phase shifts have  also been investigated to revel the structure of time and energy allocation.
Then, we have performed simulations to evaluate the superiority of our proposed schemes, which have shown that our proposed schemes can achieve remarkable sum-rate gain compared to the baseline WPCN without IRS. From simulation results, we have also observed that the PS scheme can achieve a  better performance than the TS scheme if the transmit power at the HAP is large enough or the channel between the HAP and IRS is strong. However, compared to the PS scheme, the TS scheme can be more widely applied  because it is free from the constraint defined in Lemma \ref{LemmaFeasible} for the PS scheme.

\appendices

\section{Proof of Proposition \ref{LemmaOne}}
\label{App:LemmaOne}
 It is straightforward that $R_i$ is an increasing function with respect to $|\bm{g}_r^H \bm{\Theta}_{d,i} \bm{g}_{u,i} + g_{h,i}|^2$ for $i=1,\ldots,N$. Therefore, the optimal solution of \textbf{P1} is found when $|\bm{g}_r^H \bm{\Theta}_{d,i} \bm{g}_{u,i} + g_{h,i}|^2,\forall i$ is maximized. In addition, $|\bm{g}_r^H \bm{\Theta}_{d,i} \bm{g}_{u,i} + g_{h,i}|^2$ only depends on $\bm{\Theta}_{d,i}$. As a result, for any given and feasible $\bm{t}$ and $\bm{P}_u$, maximizing the objective function of \textbf{P1} with respect $\bm{\Theta}_{d,i}$  is equivalent to solving the following problem for $i=1,\ldots,N$
 \begin{equation}\tag{$\textbf{P-A}$} 
\begin{aligned}
  \max_{\bm{\Theta}_{d,i}}~~ &|\bm{g}_r^H \bm{\Theta}_{d,i} \bm{g}_{u,i} + g_{h,i}|^2, \\ 
\text{s.t.}~~  &|v_{d,i,k}|=1, ~\forall k.
\end{aligned}
\end{equation}

The objective function $|\bm{g}_r^H \bm{\Theta}_{d,i} \bm{g}_{u,i} + g_{h,i}|^2 $ can be rewritten as $|\bm{g}_r^H \bm{\Theta}_{d,i} \bm{g}_{u,i}|^2 + |g_{h,i}|^2 + 2 |\bm{g}_r^H \bm{\Theta}_{d,i} \bm{g}_{u,i}| |g_{h,i}| \cos{\alpha}$, where 
$\alpha = \arctan {\frac{ \text{Im}(\bm{g}_r^H \bm{\Theta}_{d,i} \bm{g}_{u,i})} {\text{Re} (\bm{g}_r^H \bm{\Theta}_{d,i} \bm{g}_{u,i})}  - \arctan { \frac{\text{Im} (g_{h,i})} {\text{Re}(g_{h,i})}  } }$.
It is obvious that the maximum of $|\bm{g}_r^H \bm{\Theta}_{d,i} \bm{g}_{u,i} + g_{h,i}|^2 $ is achieved if $\alpha = 0$, i.e., $\arg(\bm{g}_r^H \bm{\Theta}_{d,i} \bm{g}_{u,i})$ = $\arg(g_{h,i}) \overset{{\Delta}}{=} \omega$. Denoting $\bm{v}_{d,i} = [v_{d,i,1},\ldots,v_{d,i,K}]^H$ and $\bm{\phi}_i =  \text{diag}(\bm{g}_r^H) \bm{g}_{u,i}$, we have $\bm{g}_r^H \bm{\Theta}_{d,i} \bm{g}_{u,i} = \sqrt{\rho} \bm{v}_{d,i}^H \bm{\phi}_i$. Then, \textbf{P-A} can be rewritten as 
 \begin{equation}\tag{$\textbf{P-B}$} 
\begin{aligned}
& \max_{\bm{v}_{d,i} } |\bm{v}_{d,i}^H \bm{\phi}_i|^2, \\ 
\text{s.t.} \ & |v_{d,i,k}|=1, ~\forall k, \\
& \arg(\bm{v}_{d,i}^H \bm{\phi}_i) = \omega.
\end{aligned}
\end{equation}

According to \cite{WuIRS}, the optimal solution to \textbf{P-B} can be expressed as
$\bm{v}_{d,i}^* = e^{j(\omega -\arg(\bm{\phi}_i))} = e^{j (\omega - \arg(\text{diag}(\bm{g}_r^H) \bm{g}_{u,i}  ) )  }$.
Then, the optimal phase shift for the $k$-th element of the IRS is expressed as 
$\theta_{d,i,k}^* = \omega - \arg(\bm{g}_{r,k}^H) - \arg(\bm{g}_{u,i,k}).$
This completes the proof of Proposition \ref{LemmaOne}.

\section{Proof of Proposition \ref{ProOptiTime}}
\label{App:ProOptiTime}
It can be verified that the objective function of \textbf{P2.1} is an increasing function with respect to $t_i$ and $e_{u,i}$ for $i=1,\ldots,N$. Therefore, at the optimal solution, C10 must be met with equality. The constraint C1 must also be satisfied with equality, because otherwise we can decrease $\tau_0$ and increase $\tau_1$, which results in more harvested energy at the users and larger transmit energy $e_{u,i},\forall i$. We can also observe that the right hand side of C1 is increasing with respect to $\tau_0$. Thus, the constraint C3 must be met with equality at the optimal solution because otherwise we can always increase $\tau_0$ as a result of which $\tau_1$ and users' harvested energy can also be increased. Similarly, the constraint C4 must also be an equality at the optimal solution as otherwise we can increase $t_0$, leading to the increase of $\tau_0$ and $\tau_1$.
Based on the three equalities from the constraints C1, C3 and C4, we can straightforwardly  obtain the optimal value of $\tau_0$ as given by \eqref{Optitau0}.

\section{Proof of Proposition \ref{ProOptiTimeAllocation}}
\label{App:ProOptiTimeAllocation}
The dual function of \textbf{P3.1} is given by 
$\mathcal{G}(\xi) = \max_{\bar{\bm t} \ge 0, \tau_1 \ge 0} \mathcal{L} (\bar{\bm t}, \tau_1, \xi)$.
Karush-Kuhn-Tucker (KKT) conditions are both necessary and sufficient for the optimality of \textbf{P3.1} \cite{BoydOne}, which are given by 
\begin{align}
\label{Partialti}
&\frac{\partial{\mathcal{L}}} {\partial{t_i}}  = \log_2 \Big(1+ \frac{a_i +b_i \tau_1^*} {t_i^* } - c_i \Big) \nonumber \\
&~~- \frac{\frac{a_i +b_i \tau_1^* }{t_i^*  } }{ \ln(2) \Big(1+ \frac{a_i + b_i \tau_1^*}{t_i^* } -c_i \Big) } - \xi^* = 0, \\
\label{Partitau1}
&\frac{\partial{\mathcal{L}}} {\partial{\tau_1}}=\sum_{i=1}^N \frac{b_i}{\ln(2) \Big(1+ \frac{a_i +b_i \tau_1^*} {t_i^* } - c_i \Big) } - \xi^* = 0, \\
\label{Slack}
&\xi^* \Big[\tau_0^* + \tau_1^* + \sum_{i=1}^N t_i^* -1  \Big] = 0.
\end{align}

Setting $z_i = \frac{a_i + b_i \tau_1}{t_i } - c_i$ and substituting it into \eqref{Partialti} and \eqref{Partitau1}, we have  
\begin{align}
\label{PartialtiNew}
&\log_2(1+z_i) - \frac{z_i + c_i}{ \ln(2) (1+z_i) } = \xi^*,\\
\label{Partialtau1New}
& \sum_{i=1}^N \frac{b_i}{\ln(2) (1+z_i) } = \xi^*.
\end{align}
It is straightforward to verify that the left hand side of \eqref{PartialtiNew} is a strictly increasing function with respect to $z_i >0$. Hence, there exists a unique solution, denoted by $z_i^*$, satisfying \eqref{PartialtiNew}. From \eqref{Partialtau1New}, we can observe that $\xi^*$ is upper-bounded by $\frac{1}{\ln(2)}\sum_{i=1}^N b_i$ and can be thus found by the bisection method. Also, \eqref{Partialtau1New} indicates that $\xi^*>0$. Having $\tau_0^* + \tau_1^* + \sum_{i=1}^N t_i^* = 1$ from \eqref{Slack} and $z_i^* = \frac{a_i +b_i\tau_1^*}{t_i^* } - c_i$, \eqref{Opttau1} and \eqref{Optti} are obtained with some simple mathematical calculations. This thus proves Proposition \ref{ProOptiTimeAllocation}.

\section{Proof of Lemma \ref{LemmaFeasible}}
\label{App:LemmaFeasible}
First of all, for the IRS to be able to assist in  downlink ET and uplink IT, we must have 
\begin{align}
\label{condition1}
    K\mu  \le  P_{irs,sat} t_0^*
\end{align}
according to C16. Otherwise if $K\mu  >  P_{irs,sat} t_0^*$, the IRS cannot harvest enough energy to power its circuit operations even if the harvested power reaches its maximum value (i.e., saturation power). Furthermore, at optimality, the received power at the energy harvester of the IRS must not be greater than the saturation power, because otherwise, the amount of the reflected power by the IRS can be increased by increasing the amplitude reflection coefficient, without affecting the amount of harvested power at the IRS. Therefore, we must have 
    $\eta P_h (1-\beta_e^{*2}) ||\bm{h}_r ||^2 \le P_{irs,sat}$.
Therefore, $\min \{\eta P_h (1-\beta_e^{*2}) ||\bm{h}_r||^2, P_{irs,sat}\}=\eta P_h (1-\beta_e^{*2}) ||\bm{h}_r||^2$. Now, according to the energy causality constraint of the IRS in C16, we have
    $K\mu  \le  \eta P_h (1-\beta_e^{*2}) ||\bm{h}_r||^2  t_0^*$.
Thus, $\beta_e^*$ is upper-bounded by 
\begin{align}
    \label{upperbound}
    \beta_e^* \le \sqrt{1-{K\mu }/( \eta P_h  ||\bm{h}_r||^2 t_0^*)}.
\end{align} 
To ensure a feasible value for $\beta_e^*$, the following condition must be satisfied:
\begin{align}
\label{condition2}
    K\mu  \le  \eta P_h ||\bm{h}_r||^2 t_0^*. 
\end{align} 
From \eqref{condition1} and \eqref{condition2} and the fact that ${t_0^*} < 1$, we obtain
$K \mu < \min\{\eta P_h ||\bm{h}_r||^2, P_{irs,sat}\}$.
Lemma \ref{LemmaFeasible} is thus proved. 
At the optimal solution, the amplitude reflection coefficient must be set to its upper-bound to maximize the amount of reflected power from the IRS. Therefore, according to \eqref{upperbound}, $\beta_e^*$ is calculated as 
$\beta_e^* = \sqrt{1-{K\mu }/( \eta P_h  ||\bm{h}_r||^2 t_0^*)}$,
where $\max \{\frac{K \mu }{\eta P_h  ||\bm{h}_r ||^2 }, \frac{K\mu }{P_{irs,sat}} \} < t_0^* < 1$ according to \eqref{condition1} and \eqref{condition2}. This thus proves Proposition \ref{Relationship}.

\end{document}